\setlist[enumerate,itemize]{noitemsep}
\newtheorem{theorem}{Theorem}
\newtheorem{lemma}{Lemma}
\newtheorem{definition}{Definition}
\newtheorem{proposition}{Proposition}
\newtheorem{fact}{Fact}
\theoremstyle{remark}
\newtheorem{example}{Example}
\newtheorem{remark}{Remark}
\colorlet{myred}{purple!30}
\colorlet{myblue}{CornflowerBlue!50}
\colorlet{mygreen}{YellowGreen!60}
\colorlet{mybgcolor}{yellow!20}
\colorlet{mynodecolor}{CadetBlue!15}
\newcommand{\Ag}[0]{\mathcal{A}}
\newcommand{\Var}[1]{\mathsf{Var}_{#1}}
\newcommand{\Atoms}[0]{\mathsf{Var}}
\newcommand{\LKB}[0]{\mathcal{L}_{B}^{K}}
\newcommand{\LB}[0]{\mathcal{L}_{B}}
\newcommand{\B}[1]{\mathord{B_{#1}}}
\newcommand{\K}[1]{\mathord{K_{#1}}}
\newcommand{\axiom}[2]{\text{(#1\textsubscript{$#2$})}}
\newcommand{\DB}[0]{\axiom{D}{B}}
\newcommand{\FourB}[0]{\axiom{4}{B}}
\newcommand{\FiveB}[0]{\axiom{5}{B}}
\newcommand{\KB}[0]{\axiom{K}{B}}
\newcommand{\KK}[0]{\axiom{K}{K}}
\newcommand{\TK}[0]{\axiom{T}{K}}
\newcommand{\FourK}[0]{\axiom{4}{K}}
\newcommand{\FiveK}[0]{\axiom{5}{K}}
\newcommand{\unaryaxiom}[1]{\text{(#1)}}
\newcommand{\KIB}[0]{\unaryaxiom{KB}}
\newcommand{\SPI}[0]{\unaryaxiom{SP}}
\newcommand{\SNI}[0]{\unaryaxiom{SN}}
\newcommand{\Loc}[0]{\unaryaxiom{Loc}}
\newcommand{\EDL}[0]{\mathsf{EDL}_{n}}
\newcommand{\LocKDFourFive}[0]{\mathsf{KD45}_{n}^{L}}
\newcommand{\LocKFourFive}[0]{\mathsf{K45}_{n}^{L}}
\newcommand{\Cond}[1]{\text{\upshape(P#1)}}
\newcommand{\KFourFive}[0]{\mathsf{K45}}
\newcommand{\KDFourFive}[0]{\mathsf{KD45}}
\newcommand{\SFive}[0]{\mathsf{S5}}
\newcommand{\DoxRels}[0]{R}
\newcommand{\EpRels}[0]{E}
\newcommand{\Brel}[1]{R_{#1}}
\newcommand{\Krel}[1]{E_{#1}}
\newcommand{\eqrel}[1]{#1^{\equiv}}
\newcommand{\sym}[1]{#1^{\leftrightarrow}}
\newcommand{\KTEclass}[0]{\mathcal{K}_{n}^{\mathit{te}}}
\newcommand{\KSTEclass}[0]{\mathcal{K}_{n}^{\mathit{ste}}}
\newcommand{\HSUclass}[0]{\mathcal{H}_{n}^{\mathit{su}}}
\newcommand{\HSUTclass}[0]{\mathcal{H}_{n}^{\mathit{sut}}}
\newcommand{\tail}[1]{T(#1)}
\newcommand{\head}[1]{H(#1)}
\newcommand{\undir}[1]{U(#1)}
\newcommand{\edge}[0]{E}
\newcommand{\Vertices}[0]{\mathscr{V}}
\newcommand{\Edges}[0]{\mathscr{E}}
\newcommand{\rk}[1]{\mathit{rk}(#1)}
\newcommand{\colmap}[0]{\chi}
\newcommand{\GBrel}[1]{\vartriangleright_{#1}}
\newcommand{\GKrel}[1]{\approx_{#1}}
\newcommand{\andsymbol}[0]{\mathrel{\&}}
\newcommand{\MCS}[0]{\mathsf{MCS}}
\newcommand{\modform}[2]{#1/#2}
\newcommand{\Vform}[2]{V_{#1}(#2)}
\newcommand{\Bform}[2]{B_{#1}(#2)}
\newcommand{\locstate}[2]{{#1}_{#2}}
\newcommand{\Tset}[1]{T_{#1}}
\newcommand{\Hset}[1]{H_{#1}}
\newcommand{\Mc}[0]{M^{c}}
\newcommand{\Gc}[0]{G^{c}}
\newcommand{\COLc}[0]{\chi^{c}}
\newcommand{\VALc}[0]{\ell^{c}}
\newcommand{\CS}{\mathscr{S}}
\newcommand{\auxrel}[0]{\mathcal{R}}
\newcommand{\hypmod}[1]{\mathfrak{H}(#1)}
\newcommand{\kripmod}[1]{\mathfrak{K}(#1)}
\newcommand{\semK}{\models_{\text{\upshape\scshape k}}}
\newcommand{\semH}{\models_{\text{\upshape\scshape h}}}
\newcommand{\prov}{\vdash}
\newcommand{\notprov}{\nvdash}
\begin{document}

\title{Hypergraph Semantics for Doxastic Logics}

\author[1]{Hans van Ditmarsch\,\orcidlink{0000-0003-4526-8687}}
\author[2]{Djanira Gomes\,\orcidlink{0009-0002-1248-4164}}
\author[2]{David Lehnherr\,\orcidlink{0000-0002-4956-4064}}
\author[2]{Valentin~M{\"u}ller\,\orcidlink{0009-0008-7251-0831}}
\author[2]{Thomas Studer\,\orcidlink{0000-0002-0949-3302}}

\affil[1]{CNRS, IRIT, University of Toulouse, France}
\affil[2]{Institute of Computer Science, University of Bern, Switzerland}

\date{}

\maketitle

\begin{abstract}
\noindent\looseness=-1 Simplicial models have become a crucial tool for studying distributed computing. These models, however, are only able to account for the knowledge, but not for the beliefs of agents. We present a new semantics for logics of belief. Our semantics is based on directed hypergraphs, a generalization of ordinary directed graphs in which edges are able to connect more than two vertices. Directed hypergraph models preserve the characteristic features of simplicial models for epistemic logic, while also being able to account for the beliefs of agents. We provide systems of both consistent belief and merely introspective belief. The completeness of our axiomatizations is established by the construction of canonical hypergraph models. We also present direct conversions between doxastic Kripke models and directed hypergraph models.\medskip

\noindent\textbf{Keywords:} Doxastic logic, epistemic logic, modal logic, hypergraphs, simplicial complexes, distributed systems, canonical models.
\end{abstract}

\section{Introduction}\label{sec:introduction}

Traditionally, epistemic modal logic is interpreted using \emph{relational semantics} on labeled graphs. Vertices correspond to possible worlds, and an $a$-labeled edge between two worlds $v$ and $w$ indicates that agent $a$ cannot distinguish $v$ from $w$. Although expressive, relational models do not intrinsically describe what exactly constitutes a possible world. 

A prominent example of a relational semantics whose possible worlds have a structure is the \emph{interpreted systems model} by Fagin et al.~\cite{fagin:etal:2003}. In an interpreted system, a possible world is a \emph{global state}, which is composed of the \emph{local states} of all agents. A local state, in turn, is a sequence of locally observed events. However, this additional expressiveness is only achieved by extending the model, rather than using the structural properties of labeled graphs.

\looseness=-1 In contrast, the newly emerging \emph{simplicial semantics} for modal logic, proposed by Goubault et al.~\cite{goubault:etal:2018}, uses the structure of a simplicial complex to emphasize the local attributes of possible worlds. This approach was inspired by the work of Herlihy et al.~\cite{herlihy:etal:2014}, who pioneered the use of simplicial complexes to reason about the solvability of distributed tasks. Similar to interpreted systems, simplicial complexes describe a global state as a composition of local states. Formally, a simplicial complex is defined to be a pair $C = (\Vertices, \CS)$, where $\Vertices$ is a set of vertices and $\CS \neq \emptyset$ is a set of non-empty finite subsets of $\Vertices$ that is closed under set inclusion. Elements of $\CS$ are called \emph{faces} (or \emph{simplices}), and faces that are maximal with respect to set inclusion are referred to as \emph{facets}. A chromatic simplicial complex is a simplicial complex equipped with a coloring function that assigns an agent to each vertex in such a way that distinct vertices belonging to the same face are assigned different agents. Intuitively, an $a$-colored vertex denotes a possible local state of agent $a$. The vertices contained in a face are considered mutually compatible. Facets are interpreted as global states, i.e., maximal sets of compatible local states. If an agent's local state belongs to two facets, the corresponding global states are considered indistinguishable for that agent.

\begin{figure}[t]
\centering%
\subfloat[]{\label{subfig:intro:a}%
\scalebox{.7}{%
\begin{tikzpicture}[minimum size=.7cm]
\node[circle,draw,fill=myred] (a1) at (0,2) {$a$};
\node[circle,draw,fill=myblue] (b1) at (0,0) {$b$};
\node[circle,draw,fill=mygreen] (c2) at (2,1) {$c$};
\node[circle,draw,fill=mygreen] (c1) at (-2,1) {$c$};

\node (E1) at (barycentric cs:a1=1,b1=1,c1=1) {$\edge_1$};
\node (E2) at (barycentric cs:a1=1,b1=1,c2=1) {$\edge_2$};

\begin{scope}[on background layer]
\filldraw[fill=mybgcolor] (a1.center)--(b1.center)--(c1.center)--cycle;
\filldraw[fill=mybgcolor] (a1.center)--(b1.center)--(c2.center)--cycle;
\end{scope} 
\end{tikzpicture}}}%
\hspace*{5em}%
\subfloat[]{\label{subfig:intro:b}%
\scalebox{.7}{%
\begin{tikzpicture}[minimum size=.7cm]
\node[circle,draw,fill=mygreen] (c1) at (2,1) {$c$};
\node[circle,draw,fill=myred] (a2) at (4,2) {$a$};
\node[circle,draw,fill=myblue] (b2) at (4,0) {$b$};
\node[circle,draw,fill=mygreen] (c2) at (6,1) {$c$};

\coordinate[label={[shift={(0,0)}]$\edge_1$}] (E1) at (barycentric cs:a2=1,b2=1,c1=2.5);
\coordinate[label={[shift={(0,0)}]$\edge_2$}] (E2) at (barycentric cs:a2=1,b2=1,c2=2.5);

\draw[->] (c1) -- (E1) to[out=0,in=180] (a2);
\draw[->] (E1) to[out=0,in=180] (b2);

\draw[->] (a2) to[out=0,in=180] (E2) -- (c2);
\draw (b2) to[out=0,in=180] (E2);
\end{tikzpicture}}}%
\caption{An undirected hypergraph (left) and a directed hypergraph (right).}\label{fig:intro}
\end{figure}
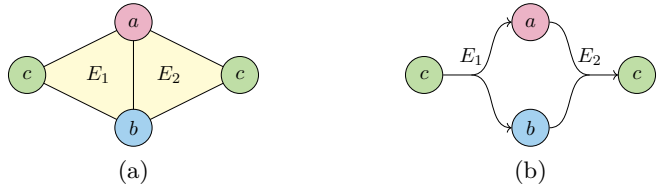

\emph{Undirected hypergraphs} are a generalization of simplicial complexes and were first proposed as models for knowledge by Goubault et al.~\cite{goubault:etal:2024}. An undirected hypergraph is a pair $G=(\Vertices,\Edges)$, where $\Vertices$ is a set of vertices and $\Edges\subseteq\mathcal{P}(\Vertices)$ is an arbitrary set of subsets of $\Vertices$. Elements of $\Edges$ are called \emph{undirected hyperedges}. As before, vertices correspond to local states. Hyperedges, on the other hand, represent global states. Local states are compatible with each other if they belong to the same hyperedge. Indistinguishability follows the same principle as in simplicial models. However, in contrast to simplicial complexes, the set of hyperedges of a hypergraph is not necessarily closed under set inclusion. Undirected hypergraphs thus allow for a more succinct representation of semantics based on local states. To illustrate this, consider the undirected hypergraph in Figure~\ref{subfig:intro:a}. The graph consists of four vertices and two hyperedges $\edge_1$ and $\edge_2$. Agents $a$ and $b$ cannot distinguish between $\edge_1$ and $\edge_2$, because their local states are the same in both. Agent $c$, on the other hand, is able to distinguish $\edge_1$ and $\edge_2$, since her local state in $\edge_1$ differs from her local state in $\edge_2$. Note that the simplicial complex representing the same situation would also have to contain all non-empty subsets of $\edge_1$ and $\edge_2$, which amounts to a total of eleven faces. 

A \emph{directed hypergraph} is a pair $G = (\Vertices, \Edges)$ consisting of a set of vertices $\Vertices$ and a set of \emph{directed hyperedges} $\Edges$. Unlike undirected hyperedges, a directed hyperedge is a pair $\edge = (X,Y)$ that contains a \emph{tail} $X \subseteq \Vertices$ and a \emph{head} $Y \subseteq \Vertices$. Throughout this paper, the tail and the head of a directed hyperedge are assumed to be disjoint. To the best of our knowledge, directed hypergraphs have not yet been studied from an epistemic point of view. In this paper, we will use directed hypergraphs as models for both consistent belief and merely introspective belief. The vertices of a directed hypergraph are taken to represent local states of agents and directed hyperedges are interpreted as global states. A global state $\edge_2$ is considered to be \emph{doxastically possible} by an agent in a global state $\edge_1$, if the agent's local state is contained in $\edge_1$ and belongs to the tail of $\edge_2$. As a consequence, the tail of a directed hyperdege $\edge$ may be conceived as the set of all local states whose associated agents consider $\edge$ possible when in $\edge$. Figure~\ref{subfig:intro:b} shows an example of a directed hypergraph with two hyperedges $\edge_1$ and $\edge_2$. Incoming arrows indicate that a vertex belongs to the head of a hyperedge, while outgoing arrows indicate that it belongs to the tail. For instance, when in $\edge_1$, agent $c$ considers $\edge_1$ possible, but the other agents do not. This is because only the local state of agent $c$ is part of the tail of $\edge_1$. Moreover, agent $a$ considers $\edge_2$ possible when in $\edge_1$, since her local state is included in the tail of $\edge_2$. However, agent $a$ does not consider $\edge_1$ possible when in $\edge_1$ or $\edge_2$, as her local state does not belong to the tail of $\edge_1$. This shows that possibility relations on directed hypergraphs need not be reflexive or symmetric. 

So far, we have not discussed reasoning about knowledge on simplicial complexes or hypergraphs. The procedure is the same for both: atoms may be assigned to either local states or global states. Randrianomentsoa et al.~\cite{randrianomentsoa:etal:2023} assign them to local states. Goubault et al.~\cite{goubault:kniazev:etal:2024,goubault:etal:2024} assign them to global states in the simplicial case, and to a mixture of both in the case of undirected hypergraphs. Regardless of which option is chosen, knowledge can be described in the usual way: an agent $a$ knows a formula $\varphi$ if and only if $\varphi$ is true in each global state that $a$ cannot distinguish from the current global state. Numerous works have analyzed different types of simplicial complexes and their relationship to various notions of (distributed) knowledge \cite{ledent:2019,ditmarsch:etal:2021,goubault:etal:2021,ditmarsch:etal:2022,goubault:etal:2022,goubault:etal:2023,randrianomentsoa:etal:2023,bilkova:etal:2024,cachin:etal:2025:a,cachin:etal:2025:b,randrianomentsoa:2025}.

Unlike knowledge, belief is generally considered to be non-factive. That is, an agent can believe a formula $\varphi$ although $\varphi$ might be false in the current state. In relational semantics, false beliefs can be addressed by simply omitting the reflexivity condition on the indistinguishability relations or by replacing it with a seriality condition. While relatively unproblematic for relational semantics, non-factivity poses a major challenge for simplicial semantics (cf.~\cite[Sect.~4.3]{castaneda:etal:2024}). In fact, in simplicial semantics, the agent's true local state must be contained in the face representing the actual global state, so the actual state is always considered possible. Hence, an indistinguishability relation on simplicial complexes that is solely based on connectivity is inherently reflexive. Cachin et al.~\cite{cachin:etal:2025:a} address this problem by studying simplicial complexes in which adjacent vertices may have the same color. This allows for a non-reflexive possibility relation. However, the philosophical interpretation of an agent simultaneously having multiple local states remains an open problem. This difficulty does not arise for models based on directed hypergraphs. Their interpretation is intuitive and allows for a possibility relation that is not necessarily reflexive or symmetric. Thus, they are compelling structures that are well suited for modeling various notions of belief. The present work also generalizes and extends the framework of directed simplicial complexes introduced by Lehnherr in his forthcoming PhD thesis \cite{lehnherr:forthcoming}.

\looseness=-1 The paper is structured as follows. In Section~\ref{sec:preliminaries}, we first recall some basic notions and introduce three axiomatic systems for multi-agent doxastic logic: a system of consistent belief, a system of merely introspective belief and a system of epistemic-doxastic logic. In Section~\ref{sec:hypergraph:semantics}, we use directed hypergraphs to define a new semantics for doxastic logics, thereby addressing the open question posed in \cite[Sect.~4.3]{castaneda:etal:2024}. In Section~\ref{sec:completeness}, we then show that our axiomatizations are sound and complete with respect to certain classes of directed hypergraph models. Our completeness proof is based on a canonical model construction. In Section~\ref{sec:conversions:kripke:hypergraph}, we provide direct conversions between doxastic Kripke models and hypergraph models. The paper concludes with an outline of directions for future work (Section~\ref{sec:conclusion}).

\section{Preliminaries}\label{sec:preliminaries}

\begin{table}[t] 
\caption{Common axiom schemes of doxastic logic.}\label{tab:belief:axioms}
\resizebox{\linewidth}{!}{%
\begin{tabular}{c@{~}l@{\hskip .75em}l@{\hskip .75em}l}
\hline
\multicolumn{3}{l}{Axiom Scheme} & Frame Property\\
\hline
$\DB$ & $\neg\B{a}(\varphi\wedge\neg\varphi)$ & Consistency of beliefs & \emph{Seriality:} $\forall u\exists v(u\Brel{a} v)$\\
$\FourB$ & $\B{a}\varphi\rightarrow\B{a}\B{a}\varphi$ & Positive introspection & \emph{Transitivity:} $(u\Brel{a} v \andsymbol v\Brel{a} w) \Rightarrow u\Brel{a} w$\\
$\FiveB$ & $\neg\B{a}\varphi\rightarrow\B{a}\neg\B{a}\varphi$ & Negative introspection & \emph{Euclideanity:} $(u\Brel{a} v \andsymbol u\Brel{a} w) \Rightarrow v\Brel{a}w$\\
\hline
\end{tabular}}
\end{table}

Throughout this paper, we assume a fixed set $\Ag$ of $n\geq 1$ agents, denoted by the meta-variables $a$, $b$, $c$, etc. Each agent $a\in\Ag$ is assigned a countable set $\Var{a}$ of propositional letters, referred to as the \emph{local variables of $a$}. The sets of local variables are assumed to be pairwise disjoint, so we have $\Var{a}\cap\Var{b} = \emptyset$ for all $a,b\in\Ag$ with $a\neq b$. The \emph{set of all variables} is given by $\Atoms := \bigcup_{a\in\Ag}\Var{a}$. Elements of $\Atoms$ are denoted by the meta-variables $p$, $q$, $r$, etc. The \emph{language of epistemic-doxastic logic}, notation $\LKB$, is generated by the following grammar: 
\begin{equation*}
\varphi ::= p \mid \neg\varphi \mid \varphi\wedge\varphi \mid \B{a}\varphi \mid \K{a}\varphi \qquad \text{($p\in\Atoms$ and $a\in\Ag$).}
\end{equation*} 
We also put $\varphi\vee\psi := \neg(\neg\varphi\wedge\neg\psi)$, $\varphi\rightarrow\psi := \neg\varphi\vee\psi$, $\varphi\leftrightarrow\psi := (\varphi\rightarrow\psi)\wedge(\psi\rightarrow\varphi)$ as well as $\bot := p\wedge\neg p$, where $p\in\Atoms$ is an arbitrary but fixed variable. Given a formula $\varphi\in\LKB$ and an agent $a\in\Ag$, we will say that $\varphi$ is an \emph{$a$-formula}, if every propositional letter occurring in $\varphi$ is a local variable from $\Var{a}$ and every modal operator in $\varphi$ is either of the form $\B{a}$ or of the form $\K{a}$. As usual, $\B{a}\varphi$ may be read as ``Agent $a$ believes $\varphi$'' and $\K{a}\varphi$ may be read as ``Agent $a$ knows $\varphi$''. The \emph{doxastic fragment of $\LKB$} is denoted by $\LB$ and given by the grammar
\begin{equation*}
\varphi ::= p \mid \neg\varphi \mid \varphi\wedge\varphi \mid \B{a}\varphi \qquad \text{($p\in\Atoms$ and $a\in\Ag$).}
\end{equation*}
The left column of Table~\ref{tab:belief:axioms} contains three axiom schemes that are usually considered in the literature on doxastic logic (cf.\ \cite{meyer:hoek:1995,fagin:etal:2003,ditmarsch:etal:2008,humberstone:2015}). Axiom scheme $\DB$ formalizes the assumption that beliefs are \emph{consistent}, and the schemes $\FourB$ and $\FiveB$ amount to the stipulation that agents are \emph{introspective}: if an agent believes (or does not believe) a formula $\varphi$, then she also believes that she believes (or that she does not believe) $\varphi$. The normal modal logic determined by $\FourB$ and $\FiveB$ is known as $\KFourFive$ and the one determined by all three axioms is known as $\KDFourFive$. As indicated in the table, $\KFourFive$ may also be characterized as the logic of Kripke frames satisfying transitivity and Euclideanity and $\KDFourFive$ may be characterized as the logic of frames satisfying seriality, transitivity and Euclideanity \cite[Sect.~3.1]{fagin:etal:2003}.

\begin{definition}[Doxastic Kripke Model]
By a \emph{doxastic Kripke frame}, we mean a pair $F = (W,\DoxRels)$, where $W\neq\emptyset$ is a set of worlds and $\DoxRels:\Ag\rightarrow \mathcal{P}(W\times W)$ is a function that assigns, to each $a\in\Ag$, a doxastic accessibility relation $\DoxRels(a)\subseteq W\times W$. The relation $\DoxRels(a)$ will also be denoted by $\Brel{a}$, so we put ${\Brel{a}} := \DoxRels(a)$ for all $a\in\Ag$. A doxastic frame $F = (W,\DoxRels)$ is \emph{serial} (resp., \emph{transitive} or \emph{Euclidean}), if each of the relations $\Brel{a}$ is serial (resp., transitive or Euclidean). A \emph{doxastic Kripke model} is a triple $M = (W,\DoxRels,V)$, where $(W,\DoxRels)$ is a doxastic frame and $V:W\rightarrow\mathcal{P}(\Atoms)$ is a valuation. A doxastic model is \emph{serial} (resp., \emph{transitive} or \emph{Euclidean}), if the underlying frame is serial (resp., transitive or Euclidean). 
\end{definition} 

By enriching a doxastic model with epistemic accessibility relations $\Krel{a}\subseteq W\times W$ for all $a\in\Ag$, one may formalize not only the \emph{beliefs} of agents, but also their \emph{knowledge}. Here, the modalities $\K{a}$ are assumed to satisfy the usual $\SFive$-axioms, so each of the relations $\Krel{a}$ is taken to be an \emph{equivalence relation} (cf.\ \cite[Sect.~4.1]{blackburn:etal:2001}). The interplay between the beliefs and the knowledge of an agent will be governed by the three axiom schemes presented in Table~\ref{tab:interaction:axioms}. A philosophical discussion of these schemes can be found in \cite{hintikka:1962,lenzen:1978,halpern:1996,stalnaker:2006,aucher:2014,humberstone:2015}. The strong introspection axioms, $\SPI$ and $\SNI$, reflect the assumption that agents have knowledge of their beliefs: if an agent believes (or does not believe) a formula $\varphi$, then she also knows that she believes (or that she does not believe) $\varphi$. The scheme $\KIB$, on the other hand, accounts for the idea that agents always believe what they know. By combining the frame conditions determined by these schemes with the usual properties of the epistemic and doxastic accessibility relations, we now arrive at the following notion of an \emph{epistemic-doxastic model}.

\begin{table}[t]
\caption{Principles connecting knowledge and belief.}\label{tab:interaction:axioms}
\resizebox{\linewidth}{!}{%
\begin{tabular}{c@{~}l@{\hskip 1.5em}l@{\hskip 1.5em}l}
\hline
\multicolumn{3}{l}{Axiom Scheme} & Frame Property\\
\hline
$\KIB$ & $\K{a}\varphi\rightarrow\B{a}\varphi$ & Knowledge implies belief & $\Cond{1}$~$u\Brel{a} v \Rightarrow u\Krel{a} v$\\
$\SPI$ & $\B{a}\varphi\rightarrow\K{a}\B{a}\varphi$ & Strong pos.\ introspection & $\Cond{2}$~$(u\Krel{a} v \andsymbol v\Brel{a} w) \Rightarrow u\Brel{a} w$\\
$\SNI$ & $\neg\B{a}\varphi\rightarrow\K{a}\neg\B{a}\varphi$ & Strong neg.\ introspection & $\Cond{3}$~$(u\Krel{a} v \andsymbol u\Brel{a} w) \Rightarrow v\Brel{a} w$\\
\hline
\end{tabular}}
\end{table}

\begin{definition}[Epistemic-Doxastic Model]
An \emph{epistemic-doxastic frame} is a triple $(W,\DoxRels, \EpRels)$, where $(W,\DoxRels)$ is a serial, transitive and Euclidean doxastic frame and ${\EpRels}:\Ag\rightarrow\mathcal{P}(W\times W)$ is a function that assigns, to each $a\in\Ag$, an equivalence relation ${\Krel{a}} := {\EpRels}(a)$ satisfying the properties $\Cond{1}$--$\Cond{3}$ from Table~\ref{tab:interaction:axioms}. An \emph{epistemic-doxastic model} is a quadruple $(W,\DoxRels,\EpRels,V)$, where $(W,\DoxRels,\EpRels)$ is an epistemic-doxastic frame and $V:W\rightarrow\mathcal{P}(\Atoms)$ is a valuation. 
\end{definition}

Given an arbitrary binary relation $R\subseteq X\times X$ over some set $X$, we will henceforth write $R^{m}$ for the \emph{$m$-th power of $R$}, which is inductively defined by $R^{0} := \{(x,x) \mid x\in X\}$ and $R^{i+1} := R\circ R^{i}$ for all $i\geq 0$, where $\circ$ denotes the composition of relations. By the \emph{inverse} of $R$, we will mean the relation $R^{-1} := \{(x,y) \mid (y,x)\in R\}$. The \emph{symmetric closure} of $R$ is given by $\sym{R} := R\cup R^{-1}$ and the \emph{reflexive-transitive closure} of $R$ is given by $R^{*} := \bigcup_{i\geq 0} R^{i}$. The \emph{equivalence relation generated by $R$}, notation $\eqrel{R}$, is the smallest equivalence relation containing $R$. Formally, $\eqrel{R}$ can be defined as the intersection of all equivalence relations $S\subseteq X\times X$ with $R\subseteq S$. One readily sees that $\eqrel{R}$ coincides with the reflexive-transitive closure of the symmetric closure of $R$, so we have $\eqrel{R} = (\sym{R})^{*}$ (cf.\ \cite[Sect.~2.1.1]{baader:nipkow:1998}). It is straightforward to show that, in an epistemic-doxastic frame, the epistemic accessibility relation $\Krel{a}$ of an agent $a\in\Ag$ is always \emph{uniquely determined} by her doxastic accessibility relation $\Brel{a}$. In fact, $\Krel{a}$ simply amounts to the equivalence relation generated by $\Brel{a}$.  

\begin{proposition}
Let $F = (W,\DoxRels, \EpRels)$ be an epistemic-doxastic frame and let $a\in\Ag$. The epistemic accessibility relation $\Krel{a}$ coincides with $\eqrel{\Brel{a}}$, so $\Krel{a} = \eqrel{\Brel{a}}$.
\end{proposition}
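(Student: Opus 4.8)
The plan is to establish the two inclusions $\eqrel{\Brel{a}} \subseteq \Krel{a}$ and $\Krel{a} \subseteq \eqrel{\Brel{a}}$ separately. The first is immediate from minimality: property $\Cond{1}$ says exactly that $\Brel{a} \subseteq \Krel{a}$, and $\Krel{a}$ is an equivalence relation by the definition of an epistemic-doxastic frame. Thus $\Krel{a}$ is \emph{an} equivalence relation containing $\Brel{a}$, and since $\eqrel{\Brel{a}}$ is by construction the \emph{smallest} such relation, we get $\eqrel{\Brel{a}} \subseteq \Krel{a}$.

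The converse inclusion is the heart of the matter, and I would derive it from the following claim: if $u \Krel{a} v$, then $u$ and $v$ have the same set of $\Brel{a}$-successors, that is, $\{w \mid u \Brel{a} w\} = \{w \mid v \Brel{a} w\}$. For one inclusion, suppose $v \Brel{a} w$; then from $u \Krel{a} v$ together with $v \Brel{a} w$, property $\Cond{2}$ yields $u \Brel{a} w$. Since $\Krel{a}$ is symmetric we also have $v \Krel{a} u$, and the reverse inclusion follows by the same argument with the roles of $u$ and $v$ interchanged. This establishes the claim.

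To finish, I would invoke seriality. Assume $u \Krel{a} v$. By the claim the two worlds share their successor set, and by seriality of the doxastic frame this set is nonempty, so there is a world $w$ with $u \Brel{a} w$ and $v \Brel{a} w$. Using the identity $\eqrel{\Brel{a}} = (\sym{\Brel{a}})^{*}$ recalled in the preliminaries, both $(u,w)$ and $(v,w)$ belong to $\sym{\Brel{a}} \subseteq \eqrel{\Brel{a}}$; since $\eqrel{\Brel{a}}$ is symmetric and transitive, $(u,v) \in \eqrel{\Brel{a}}$, as required. The step I expect to be the main obstacle is recognizing that one should route through the successor sets rather than attempt to build an explicit $\sym{\Brel{a}}$-path between $u$ and $v$ by hand: axiom $\Cond{2}$ collapses an entire $\Krel{a}$-class onto a single common set of belief-successors, and seriality then furnishes the single intermediate world $w$ that links $u$ and $v$. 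Notably, this argument uses only $\Cond{1}$, $\Cond{2}$, seriality, and the fact that $\Krel{a}$ is an equivalence relation; neither $\Cond{3}$ nor the transitivity or Euclideanity of $\Brel{a}$ is needed.
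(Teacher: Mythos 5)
Your proof is correct and follows essentially the same route as the paper's: the hard inclusion $\Krel{a}\subseteq\eqrel{\Brel{a}}$ is obtained exactly as in the paper, by using seriality together with $\Cond{2}$ to produce a common $\Brel{a}$-successor $w$ of $u$ and $v$ (your ``equal successor sets'' claim is just a mild repackaging of that step). The only, harmless, difference is in the easy inclusion, where you invoke the minimality of $\eqrel{\Brel{a}}$ directly via $\Cond{1}$ instead of the paper's induction on the powers of $\sym{\Brel{a}}$; both arguments rest on the same facts.
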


\begin{proof}
For the left-to-right inclusion, let $u,v\in W$ be arbitrary and assume that $u\Krel{a} v$. By the seriality of $\Brel{a}$, there is some $w\in W$ with $v\Brel{a} w$. Since $u\Krel{a} v$ and $v\Brel{a} w$, we have $u\Brel{a} w$ by property $\Cond{2}$. Together with $v\Brel{a} w$, this yields $u\eqrel{\Brel{a}} v$. For the other inclusion, one may use induction on $m$ in order to prove $(\sym{\Brel{a}})^{m}\subseteq {\Krel{a}}$ for all $m\in\mathbb{N}$. Since $\eqrel{\Brel{a}} = (\sym{\Brel{a}})^{*}$, this establishes the claim. 
\end{proof}

So, in order to formalize both the knowledge and the beliefs of an agent, it is sufficient to know the agent's doxastic accessibility relation, provided that this relation satisfies the usual $\KDFourFive$-properties. For this reason, we henceforth restrict ourselves to \emph{doxastic models} and interpret the modalities $\K{a}$ in terms of the generated relations $\eqrel{\Brel{a}}$. The notion of satisfaction is thus defined as follows. 

\begin{definition}[Satisfaction]
Let $M = (W,\DoxRels,V)$ be a doxastic Kripke model and let $w\in W$ be a world. The \emph{satisfaction relation $\semK$} is defined as follows: 
\begin{itemize}
\item $M,w\semK p ~:\Leftrightarrow~ p\in V(w)$,
\item $M,w\semK \neg\varphi ~:\Leftrightarrow~ M,w\not\semK\varphi$,
\item $M,w\semK \varphi\wedge\psi ~:\Leftrightarrow~ \text{$M,w\semK\varphi$ and $M,w\semK\psi$}$,
\item $M,w\semK \B{a}\varphi ~:\Leftrightarrow~ \text{$M,u\semK\varphi$ for all $u\in W$ with $w\Brel{a} u$}$,
\item $M,w\semK \K{a}\varphi ~:\Leftrightarrow~ \text{$M,u\semK\varphi$ for all $u\in W$ with $w \eqrel{\Brel{a}} u$}$.
\end{itemize}
\end{definition}   

If $M,w\semK\varphi$ holds, then we say that $\varphi$ is \emph{satisfied} at $w$ in $M$. A formula $\varphi\in\LKB$ is said to be \emph{valid in a doxastic Kripke model $M$}, notation $M\semK\varphi$, if we have $M,w\semK\varphi$ for every world $w$ in $M$. And $\varphi$ is \emph{valid in a class of doxastic models $\mathcal{C}$}, notation $\mathcal{C}\semK\varphi$, if $M\semK\varphi$ holds for every model $M\in\mathcal{C}$.

\begin{definition}[Locality and Properness]\label{def:locality:properness}
A doxastic frame $F = (W,\DoxRels)$ is \emph{proper}, if for all worlds $u,v\in W$ with $u\neq v$, there exists some agent $a\in\Ag$ such that $(u,v)\notin\eqrel{\Brel{a}}$. A model is proper, if the underlying frame is proper. We say that a doxastic model $M = (W,\DoxRels,V)$ is \emph{local}, if for every $a\in\Ag$ and for all $u,v\in W$, it is the case that $(u,v)\in \sym{\Brel{a}}$ implies $V(u)\cap\Var{a} = V(v)\cap\Var{a}$. 
\end{definition}

Observe that our notion of properness is similar to the one adopted in \cite{goubault:etal:2018}: a frame is proper, if any two worlds can be distinguished by at least one agent. However, in our case, distinguishability is defined in terms of the generated equivalence relations $\eqrel{\Brel{a}}$. According to Definition~\ref{def:locality:properness}, a model is local if any two worlds that are connected to each other by an agent's doxastic accessibility relation also agree on all local variables of this agent. Roughly speaking, locality amounts to the assumption that agents are never mistaken about their own local variables: if a local variable is true at a world $w$, then the corresponding agent also believes it to be true at $w$. More precisely, one can prove the following fact.

\begin{fact}[Local Veracity of Belief]
Let $M$ be a local doxastic model, let $a\in\Ag$ and let $\varphi\in\LKB$ be an $a$-formula. If $M$ is transitive and Euclidean, then $M\semK\varphi\rightarrow\B{a}\varphi$. And if $M$ is serial, transitive and Euclidean, then $M\semK\varphi\leftrightarrow\B{a}\varphi$. 
\end{fact}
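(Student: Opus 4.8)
The plan is to isolate the semantic core in a single invariance lemma and then read off both claims from it. Concretely, I would first prove the following: if $M = (W,\DoxRels,V)$ is local, transitive and Euclidean, then for every $a$-formula $\varphi$ and all worlds $w,u\in W$ with $w\Brel{a}u$, we have $M,w\semK\varphi$ if and only if $M,u\semK\varphi$. Phrasing invariance along the directed edge $w\Brel{a}u$ (rather than along an equivalence relation) is deliberate: it is exactly what the two statements consume, since the first needs only the forward direction of the lemma, while the converse in the biconditional statement needs the backward direction.

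I would prove the lemma by induction on the structure of the $a$-formula $\varphi$. In the atomic case $\varphi = p$ with $p\in\Var{a}$, the edge $w\Brel{a}u$ gives $(w,u)\in\sym{\Brel{a}}$, so locality yields $V(w)\cap\Var{a} = V(u)\cap\Var{a}$ and hence $p\in V(w)$ iff $p\in V(u)$. The Boolean cases are immediate from the induction hypothesis. The crucial observation for the modal cases is a purely frame-theoretic fact: whenever $w\Brel{a}u$, the successor sets coincide, i.e.\ $\{x : w\Brel{a}x\} = \{x : u\Brel{a}x\}$. Transitivity gives $\{x : u\Brel{a}x\}\subseteq\{x : w\Brel{a}x\}$ (from $w\Brel{a}u$ and $u\Brel{a}x$), and Euclideanity gives the reverse inclusion (from $w\Brel{a}u$ and $w\Brel{a}x$, conclude $u\Brel{a}x$). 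Consequently, for $\varphi = \B{a}\psi$ both sides quantify $\psi$ over the \emph{same} set of successors, so the equivalence holds without even invoking the induction hypothesis; and for $\varphi = \K{a}\psi$, the edge $w\Brel{a}u$ gives $w\eqrel{\Brel{a}}u$, so $w$ and $u$ lie in the same $\eqrel{\Brel{a}}$-class and again both sides quantify $\psi$ over the same set. This is the heart of the argument.

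With the lemma in hand, the first statement is immediate. Since seriality, transitivity and Euclideanity imply transitivity and Euclideanity, it suffices to treat the transitive-Euclidean case: if $M,w\semK\varphi$ and $w\Brel{a}u$, the forward direction of the lemma gives $M,u\semK\varphi$, whence $M,w\semK\B{a}\varphi$; as $w$ was arbitrary, $M\semK\varphi\rightarrow\B{a}\varphi$. For the biconditional I would additionally establish $M\semK\B{a}\varphi\rightarrow\varphi$, and here seriality does the work: given $M,w\semK\B{a}\varphi$, seriality supplies a witness $u$ with $w\Brel{a}u$, so $M,u\semK\varphi$, and the backward direction of the lemma transports this back to $M,w\semK\varphi$.

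I expect no real obstacle beyond bookkeeping; the one step that deserves care is the successor-set identity $\{x : w\Brel{a}x\} = \{x : u\Brel{a}x\}$, since this is precisely where transitivity and Euclideanity (the frame conditions corresponding to $\FourB$ and $\FiveB$) enter, and getting the direction of each inclusion right is what makes the modal cases of the induction collapse. It is also worth stressing that seriality is genuinely needed only for the converse implication: without a guaranteed successor, $\B{a}\varphi$ could hold vacuously at a world where $\varphi$ fails, which is exactly why the transitive-Euclidean case yields only an implication and not the full equivalence.
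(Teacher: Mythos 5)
Your proof is correct. The paper states this Fact without proof, so there is nothing to compare against; your argument is the natural one. The invariance lemma along a single edge $w\Brel{a}u$ is sound: locality handles atoms from $\Var{a}$, the successor-set identity $\{x : w\Brel{a}x\} = \{x : u\Brel{a}x\}$ (transitivity for one inclusion, Euclideanity for the other) handles $\B{a}$, the fact that $w\Brel{a}u$ places $w$ and $u$ in the same $\eqrel{\Brel{a}}$-class handles $\K{a}$, and the restriction to $a$-formulas ensures these are the only cases. The derivation of both claims from the lemma, with seriality used exactly once to produce the witness for the converse implication, is also correct.
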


\begin{figure}[t]
\setlength\fboxsep{.75em}
\noindent\framebox[\textwidth]{\parbox{\textwidth}{
\centering\small 
\begin{tabularx}{.95\textwidth}{l@{~}l@{\hskip .5em}l@{~}X}
\multicolumn{4}{l}{\emph{Axioms}: all instances of classical tautologies as well as the following schemes:}\\[.25em]
$\KB$ & $\B{a}(\varphi\rightarrow\psi)\rightarrow (\B{a}\varphi\rightarrow\B{a}\psi)$ & $\KK$ & $\K{a}(\varphi\rightarrow\psi)\rightarrow (\K{a}\varphi\rightarrow\K{a}\psi)$\\
$\DB$ & $\neg\B{a}(\varphi\wedge\neg\varphi)$ & $\FourB$ & $\B{a}\varphi\rightarrow\B{a}\B{a}\varphi$\\
$\FiveB$ &  $\neg\B{a}\varphi\rightarrow\B{a}\neg\B{a}\varphi$ & $\TK$ & $\K{a}\varphi\rightarrow\varphi$ \\
$\FourK$ & $\K{a}\varphi\rightarrow\K{a}\K{a}\varphi$ & $\FiveK$ & $\neg\K{a}\varphi\rightarrow\K{a}\neg\K{a}\varphi$\\
$\SPI$ & $\B{a}\varphi\rightarrow\K{a}\B{a}\varphi$ & $\SNI$ & $\neg\B{a}\varphi\rightarrow\K{a}\neg\B{a}\varphi$\\
$\KIB$ & $\K{a}\varphi\rightarrow\B{a}\varphi$ & $\Loc$ & $(p\rightarrow\B{a} p)\wedge(\neg p\rightarrow\B{a} \neg p)$~for $p\in \Var{a}$\\[.5em]
\multicolumn{4}{l}{-- \textit{Modus ponens}: from $\varphi$ and $\varphi\rightarrow\psi$, infer $\psi$.} \\
\multicolumn{4}{l}{-- \textit{Necessitation}: from $\varphi$, infer $\K{a}\varphi$.}
\end{tabularx}}}
\caption{The Hilbert-style system $\EDL$.}\label{fig:hilbert:system:EDL}
\end{figure}

We will henceforth write $\KTEclass$ for the class of all local and proper doxastic Kripke models that are \emph{transitive} and \emph{Euclidean}, and we write $\KSTEclass$ for the class of all local and proper doxastic models that are \emph{serial}, \emph{transitive} and \emph{Euclidean} (here and in the following, the subscript $n$ is used to emphasize that we are considering models and logics with $n$ agents). A Hilbert-style axiomatization of epistemic-doxastic logic based on local models is presented in Figure~\ref{fig:hilbert:system:EDL}. We will refer to this axiomatization as $\EDL$. As can be seen, $\EDL$ comprises the usual $\KDFourFive$-axioms for belief, the $\SFive$-axioms for knowledge and the three interaction principles from Table~\ref{tab:interaction:axioms} (observe that the schemes $\FourB$ and $\FiveB$ are actually redundant, since they are already derivable from $\SPI$, $\SNI$ and $\KIB$; we decided to include them here only for the sake of comprehensiveness). The additional axiom $\Loc$ reflects the locality constraint for doxastic models. A \emph{proof} in our system is a finite sequence of formulas $\varphi_1,\ldots,\varphi_m$, where each $\varphi_i$ is either an axiom of $\EDL$ or the result of applying modus ponens or necessitation to formulas occurring earlier in the sequence. We will write $\EDL\prov\varphi$ and say that $\varphi$ is \emph{provable} in our system, if there exists a proof in $\EDL$ that ends with $\varphi$.

\begin{theorem}[Soundness and Completeness]\label{th:EDL:complete:wrt:kripke:models}
The proof system $\EDL$ is sound and complete with respect to the class $\KSTEclass$ of all local and proper doxastic Kripke models that are serial, transitive and Euclidean. That is, for every formula $\varphi\in\LKB$, we have: $\EDL\prov\varphi$ iff $\KSTEclass\semK\varphi$.
\end{theorem}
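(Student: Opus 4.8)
The plan is to treat the two halves separately, with soundness routine and the canonical model carrying the weight of completeness. For soundness I would check that every axiom of $\EDL$ is valid throughout $\KSTEclass$ and that modus ponens and $\K{a}$-necessitation preserve validity. The belief axioms $\KB$, $\DB$, $\FourB$, $\FiveB$ are valid because in every model of $\KSTEclass$ the relation $\Brel{a}$ is serial, transitive and Euclidean. The knowledge axioms $\KK$, $\TK$, $\FourK$, $\FiveK$ hold because $\K{a}$ is interpreted by $\eqrel{\Brel{a}}$, which is by definition reflexive, transitive and symmetric (hence Euclidean). For the interaction axioms $\KIB$, $\SPI$, $\SNI$ it suffices to verify the frame conditions $\Cond{1}$--$\Cond{3}$ of Table~\ref{tab:interaction:axioms} with $\Krel{a} := \eqrel{\Brel{a}}$; condition $\Cond{1}$ is immediate from $\Brel{a}\subseteq\eqrel{\Brel{a}}$, while $\Cond{2}$ and $\Cond{3}$ follow from the observation already used in the proof of the proposition above, namely that for serial, transitive and Euclidean $\Brel{a}$ any two $\eqrel{\Brel{a}}$-related worlds have the same set of $\Brel{a}$-successors. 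Finally, $\Loc$ is valid by locality.

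For completeness I would argue by contraposition. Given $\EDL\notprov\varphi$, extend $\{\neg\varphi\}$ by Lindenbaum's Lemma to a maximal $\EDL$-consistent set $\Gamma_0$ and build the canonical model $\Mc = (W^{c}, R^{c}, V^{c})$, where $W^{c}$ is the set of all maximal consistent sets, $\Gamma R^{c}_{a} \Delta$ iff $\{\psi : \B{a}\psi\in\Gamma\}\subseteq\Delta$, and $V^{c}(\Gamma) = \Gamma\cap\Atoms$. I would also introduce the auxiliary relation $\Gamma E^{c}_{a} \Delta$ iff $\{\psi : \K{a}\psi\in\Gamma\}\subseteq\Delta$. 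The core lemma is that $E^{c}_{a} = \eqrel{R^{c}_{a}}$: the inclusion $\eqrel{R^{c}_{a}}\subseteq E^{c}_{a}$ holds because $\KIB$ yields $R^{c}_{a}\subseteq E^{c}_{a}$ while the $\SFive$-axioms make $E^{c}_{a}$ an equivalence relation, and the converse inclusion follows from the strong introspection axioms, which force any two $E^{c}_{a}$-related sets to share all their $\B{a}$-beliefs and hence, using seriality of $R^{c}_{a}$, to be $\eqrel{R^{c}_{a}}$-related. With this identity the Truth Lemma $\Mc, \Gamma\semK\psi \Leftrightarrow \psi\in\Gamma$ is proved by the usual induction on $\psi$, the two modal cases being handled by the standard existence lemmas for $R^{c}_{a}$ and for $E^{c}_{a} = \eqrel{R^{c}_{a}}$.

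It then remains to verify that the underlying doxastic model $\Mc$ actually lies in $\KSTEclass$. Seriality, transitivity and Euclideanity of $R^{c}_{a}$ are the standard canonical correspondents of $\DB$, $\FourB$, $\FiveB$. Locality follows from $\Loc$: if $\Gamma R^{c}_{a}\Delta$ and $p\in\Var{a}$, then $p\in\Gamma$ forces $\B{a}p\in\Gamma$ and hence $p\in\Delta$, and dually for $\neg p$, so $\Gamma$ and $\Delta$ agree on $\Var{a}$. The delicate point---and what I expect to be the main obstacle---is properness, since $\KSTEclass$ restricts to proper models. Here I would first derive the theorems $p\leftrightarrow\B{a}p$ (using $\DB$ and $\Loc$), then $p\leftrightarrow\K{a}p$ for $p\in\Var{a}$ (additionally using $\SPI$, $\KK$, $\TK$ and $\K{a}$-necessitation), together with $\B{a}\psi\leftrightarrow\K{a}\B{a}\psi$. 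Combining these equivalences with $E^{c}_{a}=\eqrel{R^{c}_{a}}$, one shows that if $(\Gamma,\Delta)\in\eqrel{R^{c}_{a}}$ for every $a\in\Ag$, then $\Gamma$ and $\Delta$ agree on all atoms and on all modal formulas, whence $\Gamma=\Delta$; contrapositively, distinct worlds are separated by some $\eqrel{R^{c}_{a}}$, which is exactly properness. Since $\neg\varphi\in\Gamma_0$, the Truth Lemma gives $\Mc,\Gamma_0\not\semK\varphi$ with $\Mc\in\KSTEclass$, so $\KSTEclass\not\semK\varphi$, establishing completeness.
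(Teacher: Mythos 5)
Your proposal is correct and follows essentially the same route as the paper, which itself only points to the standard canonical-model construction of Fagin et al.\ and omits the details. You correctly identify and handle the two non-standard ingredients (locality via \Loc{} and, more delicately, properness via the identity $E^{c}_{a}=\eqrel{R^{c}_{a}}$ together with an induction showing that worlds agreeing on all atoms and all $\B{a}$-/$\K{a}$-formulas coincide), which is precisely the argument the paper later spells out for its hypergraph canonical model in Lemmas~\ref{lem:equality:of:mcs} and~\ref{lem:equality:local:states}.
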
 

As usual, the soundness is established by induction on the length of a proof in $\EDL$. For the completeness part, one may use a standard canonical model construction. The proof works in essentially the same way as in \cite[Theorems~3.1.3 and 3.1.5]{fagin:etal:2003}. We omit the details here, since Theorem~\ref{th:EDL:complete:wrt:kripke:models} will not play any significant role in the further course of our investigation. In order to formalize \emph{only} the beliefs of agents (but not their knowledge), one may omit all axioms involving the $\K{}$-modalities and restrict the remaining schemes to formulas from the doxastic fragment $\LB$. The resulting proof system, henceforth denoted by $\LocKDFourFive$, is presented in Figure~\ref{fig:hilbert:system:locKDFourFive} and can be seen as a local variant of multi-agent $\KDFourFive$. A weaker doxastic logic may be obtained from $\LocKDFourFive$ by also omitting axiom scheme $\DB$. We will refer to this weaker system as $\LocKFourFive$. It is easy to show that $\LocKFourFive$ is sound and complete with respect to the class $\KTEclass$ and that $\LocKDFourFive$ is sound and complete with respect to the class $\KSTEclass$. That is, for every formula $\varphi$ from the fragment $\LB$, we have $\LocKFourFive\prov\varphi\Leftrightarrow\KTEclass\semK\varphi$ as well as $\LocKDFourFive\prov\varphi\Leftrightarrow\KSTEclass\semK\varphi$. The proof is again standard and therefore omitted.

\begin{figure}[t]
\setlength\fboxsep{.75em}
\noindent\framebox[\textwidth]{\parbox{\textwidth}{
\centering\small
\begin{tabularx}{.95\textwidth}{l@{~}X@{\hskip 3em}l@{~}X}
\multicolumn{4}{l}{\emph{Axioms}: all instances of classical tautologies as well as the following schemes:}\\[.25em]
$\KB$ & $\B{a}(\varphi\rightarrow\psi)\rightarrow (\B{a}\varphi\rightarrow\B{a}\psi)$ & $\DB$ &  $\neg\B{a}(\varphi\wedge\neg\varphi)$\\
$\FourB$ & $\B{a}\varphi\rightarrow\B{a}\B{a}\varphi$ & $\FiveB$ & $\neg\B{a}\varphi\rightarrow\B{a}\neg\B{a}\varphi$\\
$\Loc$ & \multicolumn{3}{@{}l}{$(p\rightarrow\B{a} p)\wedge(\neg p\rightarrow\B{a} \neg p)$~for $p\in \mathsf{Var}_a$}\\[.5em]
\multicolumn{4}{l}{-- \textit{Modus ponens}: from $\varphi$ and $\varphi\rightarrow\psi$, infer $\psi$.} \\
\multicolumn{4}{l}{-- \textit{Necessitation}: from $\varphi$, infer $\B{a}\varphi$.}
\end{tabularx}
}}
\caption{The Hilbert-style system $\LocKDFourFive$.}\label{fig:hilbert:system:locKDFourFive}
\end{figure}

\section{Hypergraph Semantics for Doxastic Logics}\label{sec:hypergraph:semantics}

We now present an alternative semantics for doxastic logic. In contrast to Kripke models, the primary objects of our semantics are not taken to be \emph{possible worlds} (global states), but \emph{local states} of the agents (representing their ``perspectives'' on the worlds). Our models will be built up from \emph{directed hypergraphs}, a generalization of ordinary directed graphs in which edges are allowed to connect more than two vertices \cite{gallo:etal:1993}. The vertices of a directed hypergraph are interpreted as local states and the edges are interpreted as global states. Hypergraphs may also be conceived as a generalization of \emph{simplicial complexes}, which are known to provide a semantics for various systems of epistemic logic \cite{ledent:2019,goubault:etal:2021,ditmarsch:etal:2021,goubault:etal:2022,ditmarsch:etal:2022,randrianomentsoa:etal:2023}. Our semantics preserves the characteristic features of simplicial models (namely, their geometric nature and the precedence of local states over global states), while being able to capture not only the \emph{knowledge}, but also the \emph{beliefs} of agents. 

We start by introducing some basic notions. By an \emph{undirected hyperedge} over a set of vertices $\Vertices$, we simply mean a finite subset $\edge\subseteq \Vertices$. The vertices contained in an undirected hyperedge are thought of as being \emph{adjacent} to each other by means of this hyperedge \cite[Ch.~17]{berge:1973}. A \emph{directed hyperedge} over $\Vertices$, on the other hand, is defined to be an ordered pair $\edge=(X,Y)$, where $X,Y\subseteq \Vertices$ are finite and disjoint sets of vertices \cite{gallo:etal:1993}. Note that, in particular, $X$ and $Y$ are also allowed to be empty. The set $X$ is said to be the \emph{tail} of $\edge$ and the set $Y$ is said to be the \emph{head} of $\edge$. In what follows, the tail and the head of a directed hyperedge $\edge$ will be denoted by $\tail{\edge}$ and $\head{\edge}$, respectively. Given a directed hyperedge $\edge$, we also write $\undir{\edge} := \tail{\edge}\cup\head{\edge}$ for the \emph{undirected hyperedge induced by $\edge$}.

\begin{definition}[Directed and Undirected Hypergraphs]
A \emph{directed hypergraph} is defined to be a pair $G=(\Vertices,\Edges)$, where $\Vertices$ is a non-empty set of vertices and $\Edges\subseteq \mathcal{P}(\Vertices)\times\mathcal{P}(\Vertices)$ is a set of directed hyperedges. The notion of an \emph{undirected hypergraph} is defined in the same way, except that, in this case, $\Edges$ is taken to be a set of undirected hyperedges, so $\Edges\subseteq\mathcal{P}(\Vertices)$. 
\end{definition} 

We will write $\Vertices(G)$ for the set of \emph{vertices} and $\Edges(G)$ for the set of \emph{hyperedges} of a hypergraph $G$. A directed hypergraph $G$ is said to be \emph{simple}, if for all hyperedges $\edge_1,\edge_2\in\Edges(G)$, it is the case that $\edge_1\neq\edge_2$ implies $\undir{\edge_1}\not\subseteq\undir{\edge_2}$. The \emph{rank} of a directed hypergraph $G$, notation $\rk{G}$, is defined to be the maximum number of vertices occurring in the hyperedges of $G$, so we put $\rk{G} := \max_{\edge\in\Edges(G)} |\undir{\edge}|$ (if there is no finite bound on the size of hyperedges in $G$, then we put $\rk{G} := \infty$). We will say that a hypergraph $G$ is \emph{$k$-uniform} for some $k\in\mathbb{N}$, if every hyperedge of $G$ contains exactly $k$ vertices, i.e., if we have $|\undir{\edge}| = k$ for all $\edge\in\Edges(G)$. And $G$ is said to be \emph{tail-complete}, if every vertex of $G$ belongs to the tail of at least one hyperedge, so $\Vertices(G) = \bigcup_{\edge\in\Edges(G)}\tail{\edge}$.

Note that, by ignoring the directions of hyperedges, any $k$-uniform hypergraph can also be given a natural geometric interpretation. In particular, a $2$-uni\-form graph corresponds to a set of \emph{line segments}, a $3$-uni\-form graph to a set of \emph{triangles}, a $4$-uni\-form graph to a set of \emph{tetrahedrons}, etc. More generally, every simple $k$-uniform hypergraph may be conceived as a collection of $(k-1)$-dimensional \emph{simplices} glued together at a proper subset of their vertices.

\begin{remark}
An \emph{abstract simplicial complex} $C$ over a set of vertices $\Vertices$ is a set of non-empty finite subsets of $\Vertices$ that is closed under set inclusion (i.e., if $X\in C$ and $\emptyset\neq Y\subseteq X$, then $Y\in C$) and that also contains all singleton subsets of $\Vertices$. The elements of a simplicial complex $C$ are referred to as \emph{simplices}. A simplex $X\in C$ is said to be a \emph{facet} of $C$, if $X$ is not a proper subset of any other simplex in $C$. The \emph{dimension of a simplex} $X\in C$ is given by $\dim(X) := |X|-1$. A simplicial complex $C$ is \emph{pure}, if all facets of $C$ have the same dimension \cite[Sect.~3.2]{herlihy:etal:2014}. Obviously, every simplicial complex may be conceived as an undirected hypergraph that contains only non-empty hyperedges and that is also \emph{downward closed} (i.e., if $\edge$ is a hyperedge in the graph, then so is every non-empty subset of $\edge$). The \emph{simplicial complex induced by a directed hypergraph $G$} is given by $C_G := \{X \mid \text{$\exists \edge\in\Edges(G)$ with $X\subseteq\undir{\edge}$ and $X\neq\emptyset$}\}$. The facets of $C_G$ correspond to those hyperedges $\edge\in\Edges(G)$ for which $\undir{\edge}$ is maximal with respect to inclusion. It is easy to see that, if $G$ is simple, then there is also a \emph{one-to-one correspondence} between the facets of $C_G$ and the hyperedges in $G$. And if $G$ is $k$-uniform, then every facet of $C_G$ is of dimension $k-1$, so $C_G$ is a \emph{pure} simplicial complex.
\end{remark} 

By a \emph{coloring} of a directed hypergraph $G$, we mean a function $\colmap:\Vertices(G)\rightarrow\Ag$ that assigns, to each vertex $u\in \Vertices(G)$, an agent $\colmap(u)\in\Ag$, in such a way that distinct vertices belonging to the same hyperedge are assigned different colors under $\colmap$. More precisely, for every $\edge\in\Edges(G)$ and for all $u,v\in \undir{\edge}$ with $u\neq v$, we should have $\colmap(u)\neq \colmap(v)$. If such a coloring exists for a directed hypergraph $G$, then $G$ is said to be \emph{colorable}. Clearly, every colorable hypergraph $G$ must satisfy $\rk{G}\leq |\Ag|$. A \emph{chromatic hypergraph} is a pair $(G,\colmap)$, where $G$ is a colorable directed hypergraph and $\colmap:\Vertices(G)\rightarrow\Ag$ is a coloring of $G$. We say that a chromatic hypergraph $(G,\colmap)$ is \emph{simple} (respectively, \emph{$k$-uni\-form} or \emph{tail-complete}), if the underlying hypergraph $G$ is simple (respectively, $k$-uniform or tail-com\-plete). Moreover, for any subset $S\subseteq\Vertices(G)$, we put $\colmap(S) := \{\colmap(u) \mid u\in S\}$.

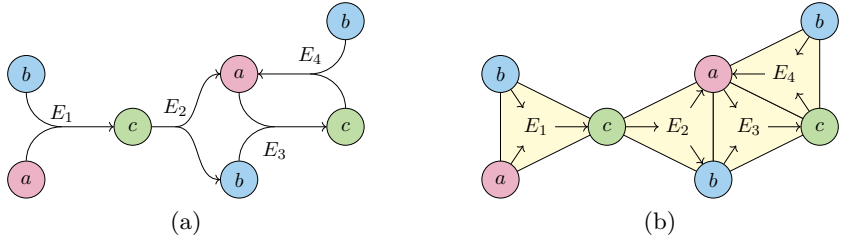
\begin{figure}[t]
\centering%
\subfloat[]{\label{subfig:uniform:graph:a}%
\scalebox{.7}{%
\begin{tikzpicture}[minimum size=.7cm]
\node[circle,draw,fill=myred] (a1) at (0,0) {$a$};
\node[circle,draw,fill=myblue] (b1) at (0,2) {$b$};
\node[circle,draw,fill=mygreen] (c1) at (2,1) {$c$};
\node[circle,draw,fill=myred] (a2) at (4,2) {$a$};
\node[circle,draw,fill=myblue] (b2) at (4,0) {$b$};
\node[circle,draw,fill=mygreen] (c2) at (6,1) {$c$};
\node[circle,draw,fill=myblue] (b3) at (6,3) {$b$};

\coordinate[label={[shift={(0,-0.1)}]$\edge_1$}] (E1) at (barycentric cs:a1=1,b1=1,c1=1);
\coordinate[label={[shift={(0,0)}]$\edge_2$}] (E2) at (barycentric cs:a2=1,b2=1,c1=3);
\coordinate[label={[below,shift={(0,-0.1)}]$\edge_3$}] (E3) at (barycentric cs:a2=1,b2=1,c2=1);
\coordinate[label={[shift={(0,0)}]$\edge_4$}] (E4) at (barycentric cs:a2=1,b3=1,c2=1);

\draw[->] (a1) to[out=90,in=180] (E1) -- (c1);
\draw (b1) to[out=-90,in=180] (E1);

\draw[->] (c1) -- (E2) to[out=0,in=180] (a2);
\draw[->] (E2) to[out=0,in=180] (b2);

\draw[->] (a2) to[out=-90,in=180] (E3) -- (c2);
\draw (b2) to[out=90,in=180] (E3);

\draw[->] (b3) to[out=270,in=0] (E4) -- (a2);
\draw (c2) to[out=90,in=0] (E4);
\end{tikzpicture}}}%
\hspace*{4em}%
\subfloat[]{\label{subfig:uniform:graph:b}%
\scalebox{.7}{%
\begin{tikzpicture}[minimum size=.7cm]
\node[circle,draw,fill=myred] (a1) at (0,0) {$a$};
\node[circle,draw,fill=myblue] (b1) at (0,2) {$b$};
\node[circle,draw,fill=mygreen] (c1) at (2,1) {$c$};
\node[circle,draw,fill=myred] (a2) at (4,2) {$a$};
\node[circle,draw,fill=myblue] (b2) at (4,0) {$b$};
\node[circle,draw,fill=mygreen] (c2) at (6,1) {$c$};
\node[circle,draw,fill=myblue] (b3) at (6,3) {$b$};

\node (E1) at (barycentric cs:a1=1,b1=1,c1=1) {$\edge_1$};
\node (E2) at (barycentric cs:a2=1,b2=1,c1=1) {$\edge_2$};
\node (E3) at (barycentric cs:a2=1,b2=1,c2=1) {$\edge_3$};
\node (E4) at (barycentric cs:a2=1,b3=1,c2=1) {$\edge_4$};

\begin{scope}[on background layer]
\filldraw[fill=mybgcolor] (a1.center)--(b1.center)--(c1.center)--cycle;
\filldraw[fill=mybgcolor] (a2.center)--(b2.center)--(c1.center)--cycle;
\filldraw[fill=mybgcolor] (a2.center)--(b2.center)--(c2.center)--cycle;
\filldraw[fill=mybgcolor] (a2.center)--(b3.center)--(c2.center)--cycle;
\end{scope} 

\draw[->] (a1) edge (E1) (b1) edge (E1) (E1) edge (c1);
\draw[->] (c1) edge (E2) (E2) edge (a2) (E2) edge (b2);
\draw[->] (a2) edge (E3) (b2) edge (E3) (E3) edge (c2);
\draw[->] (E4) edge (a2) (b3) edge (E4) (c2) edge (E4);
\end{tikzpicture}}}%
\caption{Two visualizations of a simple $3$-uniform chromatic hypergraph.}\label{fig:uniform:graph}
\end{figure}

\begin{example}
Figure~\ref{fig:uniform:graph} comprises two different illustrations of a simple $3$-uniform chromatic hypergraph. The illustration in Figure~\ref{subfig:uniform:graph:a} uses a common method for the visualization of directed hypergraphs. In this illustration, hyperedges are drawn as ``multi-arrows'', i.e., arrows that may have an arbitrary number of source and target nodes \cite{gallo:etal:1993}. The illustration in Figure~\ref{subfig:uniform:graph:b} emphasizes the geometric nature of $k$-uniform hypergraphs and their connection to pure simplicial complexes. There, hyperedges are represented as triangles (possibly intersecting each other at some of their vertices). Note that, in the graph shown, every vertex is contained in the tail of some hyperedge, so this graph is also tail-complete. 
\end{example}
 
\begin{definition}[Hypergraph Model]\label{def:hypergraph:model}
A \emph{hypergraph model} is a triple $M=(G,\colmap,\ell)$, where $(G,\colmap)$ is a chromatic directed hypergraph and $\ell:\Vertices(G)\rightarrow \mathcal{P}(\Atoms)$ is a valuation function such that, for all $u\in\Vertices(G)$, if $\colmap(u) = a$, then $\ell(u)\subseteq\Var{a}$.
\end{definition}

Given a hypergraph model $M=(G,\colmap,\ell)$ and an edge $\edge\in\Edges(G)$, we will also write $\ell(\edge)$ for the set of propositional variables defined by $\ell(\edge) := \bigcup_{u\in\undir{\edge}} \ell(u)$. 

\begin{definition}[Simplicity, Uniformity, Tail-Completeness]
Let $n = |\Ag|$. A hypergraph model $M=(G,\colmap,\ell)$ is \emph{simple} (resp., \emph{$n$-uniform} or \emph{tail-complete}), if the hypergraph $G$ is simple (resp., $n$-uniform or tail-complete). We write $\HSUclass$ for the class of all simple and $n$-uniform hypergraph models, and we write $\HSUTclass$ for the class of all simple, $n$-uniform and tail-complete hypergraph models.
\end{definition}

Intuitively, every $a$-colored vertex of a directed chromatic hypergraph represents a \emph{local state} of agent $a\in\Ag$. The directed hyperedges of the graph are taken to represent \emph{possible worlds} or \emph{global states}. Two global states $\edge_1$ and $\edge_2$ are \emph{epistemically indistinguishable} for an agent $a\in\Ag$, if there is an $a$-col\-ored vertex that belongs to both $\undir{\edge_1}$ and $\undir{\edge_2}$. And a global state $\edge_2$ is considered to be \emph{doxastically possible} by an agent $a\in\Ag$ at a global state $\edge_1$, if there is an $a$-colored vertex that belongs to $\undir{\edge_1}$ and to the \emph{tail} $\tail{\edge_2}$ of the state $\edge_2$. We thus adopt the following notions of \emph{accessibility} between global states. 

\begin{definition}[Accessibility Relations]
Let $(G,\colmap)$ be a chromatic hypergraph. For every agent $a\in\Ag$, we define a \emph{doxastic accessibility relation} $\GBrel{a}$ and an \emph{epistemic accessibility relation} $\GKrel{a}$ as follows, for all $\edge_1,\edge_2\in\Edges(G)$: 
\begin{itemize}
\item $\edge_1\GBrel{a}\edge_2 ~:\Leftrightarrow~ a\in\colmap(\undir{\edge_1}\cap\tail{\edge_2})$,
\item $\edge_1\GKrel{a}\edge_2 ~:\Leftrightarrow~ a\in\colmap(\undir{\edge_1}\cap\undir{\edge_2})$.
\end{itemize} 
\end{definition} 

So, for example, in the graph from Figure~\ref{fig:uniform:graph}, we have $\edge_2\GBrel{a}\edge_3$ and $\edge_3\GBrel{a}\edge_3$, but neither $\edge_2\GBrel{a}\edge_2$ nor $\edge_3\GBrel{a}\edge_2$. Observe that $\GKrel{a}$ essentially amounts to the usual indistinguishability relation between facets in simplicial models \cite{goubault:etal:2021,ditmarsch:etal:2022}. 

\begin{proposition}\label{prop:properties:defined:relations}
Let $a\in\Ag$ be an agent. In every $n$-uniform chromatic hy\-per\-graph $(G,\colmap)$, the accessibility relations $\GBrel{a}$ and $\GKrel{a}$ have the following properties:   
\begin{enumerate}
\item\label{prop:properties:defined:relations:i} $\GBrel{a}$ is transitive and Euclidean. If $G$ is tail-complete, then $\GBrel{a}$ is also serial.
\item\label{prop:properties:defined:relations:ii} $\GKrel{a}$ is reflexive, symmetric and transitive. 
\end{enumerate} 
\end{proposition}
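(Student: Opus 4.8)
The plan is to exploit the defining feature of $n$-uniform chromatic hypergraphs. Since every hyperedge $\edge$ satisfies $|\undir{\edge}| = n$ and its vertices carry pairwise distinct colors drawn from the $n$-element set $\Ag$, a counting argument shows that $\colmap$ restricts to a \emph{bijection} between $\undir{\edge}$ and $\Ag$. In particular, each hyperedge contains exactly one vertex of every color, so for each agent $a$ and each $\edge\in\Edges(G)$ there is a \emph{unique} $a$-colored vertex, which I denote $v_{a}(\edge)$.

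The first step is to rewrite both relations in terms of these distinguished vertices. Unwinding the definitions, any vertex witnessing $a\in\colmap(\undir{\edge_1}\cap\undir{\edge_2})$ must be $a$-colored and lie in both edges, hence must coincide with both $v_{a}(\edge_1)$ and $v_{a}(\edge_2)$; thus $\edge_1\GKrel{a}\edge_2$ holds iff $v_{a}(\edge_1) = v_{a}(\edge_2)$. An analogous argument, using $\tail{\edge_2}\subseteq\undir{\edge_2}$, shows that $\edge_1\GBrel{a}\edge_2$ holds iff $v_{a}(\edge_1) = v_{a}(\edge_2)$ \emph{and} $v_{a}(\edge_2)\in\tail{\edge_2}$.

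Once these characterizations are in place, part~\ref{prop:properties:defined:relations:ii} is immediate: $\GKrel{a}$ is exactly the equivalence relation induced by the map $\edge\mapsto v_{a}(\edge)$ (its kernel), and kernels of functions are always reflexive, symmetric and transitive. For part~\ref{prop:properties:defined:relations:i}, writing $P(\edge)$ for the side condition ``$v_{a}(\edge)\in\tail{\edge}$'', transitivity and Euclideanity reduce to elementary reasoning: if $\edge_1\GBrel{a}\edge_2$ and $\edge_2\GBrel{a}\edge_3$, then $v_{a}(\edge_1) = v_{a}(\edge_2) = v_{a}(\edge_3)$ together with $P(\edge_3)$, giving $\edge_1\GBrel{a}\edge_3$; if $\edge_1\GBrel{a}\edge_2$ and $\edge_1\GBrel{a}\edge_3$, then $v_{a}(\edge_2) = v_{a}(\edge_1) = v_{a}(\edge_3)$ together with $P(\edge_3)$, giving $\edge_2\GBrel{a}\edge_3$. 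The side condition is automatically inherited, since it constrains only the target edge. Finally, under tail-completeness, seriality follows by choosing, for a given $\edge_1$, any hyperedge $\edge_2$ whose tail contains the vertex $v_{a}(\edge_1)$ (such an edge exists because every vertex lies in some tail); uniqueness of colored vertices forces $v_{a}(\edge_2) = v_{a}(\edge_1)\in\tail{\edge_2}$, whence $\edge_1\GBrel{a}\edge_2$.

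I expect the only genuinely substantive step to be the initial observation that $n$-uniformity together with colorability forces precisely one vertex of each color per hyperedge; everything downstream is a mechanical translation into equalities of distinguished vertices. The one point requiring a little care is the seriality argument: the witnessing edge must contain $v_{a}(\edge_1)$ in its \emph{tail}, not merely among its vertices, so tail-completeness is genuinely needed here and cannot be weakened to the mere existence of an $a$-colored vertex.
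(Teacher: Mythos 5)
Your proof is correct and follows essentially the same route as the paper's: both arguments hinge on the observation that $n$-uniformity plus colorability forces exactly one $a$-colored vertex per hyperedge, so the witnessing vertices for the various relational facts must coincide. Your packaging via the distinguished vertex $v_a(\edge)$ and the kernel of $\edge\mapsto v_a(\edge)$ merely makes explicit a uniqueness fact that the paper's proof uses implicitly, and it also covers part~(2), which the paper leaves to the reader.
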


\begin{proof}
We only prove part~(\ref{prop:properties:defined:relations:i}). Let $(G,\colmap)$ be an $n$-uniform chro\-mat\-ic hypergraph and let $a\in\Ag$. For the transitivity of $\GBrel{a}$, assume that $\edge_1\GBrel{a}\edge_2$ and $\edge_2\GBrel{a}\edge_3$. Then, clearly, there is a vertex $u\in\Vertices(G)$ with $\colmap(u) = a$ such that $u\in \undir{\edge_1}\cap\tail{\edge_2}$ and $u\in \undir{\edge_2}\cap\tail{\edge_3}$. Hence, this vertex also satisfies $u\in \undir{\edge_1}\cap\tail{\edge_3}$, so we have $\edge_1\GBrel{a}\edge_3$, as desired. The Euclideanity of $\GBrel{a}$ can be established in a similar way. For the second half of part~(\ref{prop:properties:defined:relations:i}), suppose now that $G$ is also tail-com\-plete and let $\edge\in\Edges(G)$ be arbitrary. Since $G$ is $n$-uniform, there must be some $u\in\undir{\edge}$ with $\colmap(u) = a$. And by the tail-com\-plete\-ness of $G$, we have $u\in\tail{\edge'}$ for some $\edge'\in\Edges(G)$, so it follows $\edge\GBrel{a}\edge'$. Therefore, $\GBrel{a}$ is serial. 
\end{proof}

The satisfaction relation between global states and formulas can be defined in essentially the same way as in Kripke semantics. However, the modalities $\B{a}$ and $\K{a}$ are now interpreted in terms of the defined relations $\GBrel{a}$ and $\GKrel{a}$, respectively.  

\begin{definition}[Satisfaction]
Let $M=(G,\colmap,\ell)$ be a hypergraph model and let $\edge\in\Edges(G)$ be a hyperedge. The \emph{satisfaction relation $\semH$} is defined as follows: 
\begin{itemize}
\item $M,\edge\semH p ~:\Leftrightarrow~ p\in\ell(\edge)$,
\item $M,\edge\semH \neg\varphi ~:\Leftrightarrow~ M,\edge\not\semH\varphi$,
\item $M,\edge\semH \varphi\wedge\psi ~:\Leftrightarrow~ \text{$M,\edge\semH\varphi$ and $M,\edge\semH\psi$}$,
\item $M,\edge\semH \B{a}\varphi ~:\Leftrightarrow~ \text{$M,\edge'\semH\varphi$ for all $\edge'\in\Edges(G)$ with $\edge\GBrel{a}\edge'$}$,
\item $M,\edge\semH \K{a}\varphi ~:\Leftrightarrow~ \text{$M,\edge'\semH\varphi$ for all $\edge'\in\Edges(G)$ with $\edge\GKrel{a}\edge'$}$.
\end{itemize}
\end{definition}

If $M,\edge\semH\varphi$ holds, then we say that $\varphi$ is \emph{satisfied} by the hyperedge $\edge$ in $M$. A formula $\varphi$ is said to be \emph{valid in a hypergraph model $M$}, notation $M\semH\varphi$, if we have $M,\edge\semH\varphi$ for every hyperedge $\edge$ in $M$. And $\varphi$ is \emph{valid in a class $\mathcal{C}$ of hypergraph models}, notation $\mathcal{C}\semH\varphi$, if $M\semH\varphi$ holds for all models $M\in\mathcal{C}$.

\section{Soundness and Completeness}\label{sec:completeness}

We will now prove that hypergraph models indeed provide an appropriate semantics for the doxastic logics introduced in Section~\ref{sec:preliminaries}. In particular, we will see that both the epis\-temic-dox\-as\-tic system $\EDL$ and the pure doxastic logic $\LocKDFourFive$ are sound and complete with respect to the class $\HSUTclass$ of all simple, $n$-uni\-form and tail-complete hypergraph models. Moreover, the weak doxastic logic $\LocKFourFive$ is sound and complete with respect to the class $\HSUclass$ of all simple and $n$-uni\-form hypergraph models. It is worth noting that our completeness proof will not depend on the existence of a corresponding completeness result for Kripke models, as it is sometimes the case in the literature on simplicial semantics for epistemic logics (see, e.g., \cite[Sect.~3.2.3]{ledent:2019} or \cite[Sect.~2.2]{goubault:etal:2021}). Instead, we will establish the completeness of our calculi directly by means of \emph{canonical hypergraph models}. A similar canonical model construction (for a semantics based on impure simplicial complexes) was recently presented in \cite{randrianomentsoa:etal:2023}. For the sake of brevity, we will explain the details of our completeness proof only for the epis\-temic-dox\-as\-tic system $\EDL$. However, essentially the same argument can, \emph{mutatis mutandis}, be used to establish the completeness of the systems $\LocKFourFive$ and $\LocKDFourFive$. 

We start by introducing some terminology. Given a set of formulas $\Gamma\subseteq\LKB$, we will write $\Gamma\prov\varphi$ and say that \emph{$\varphi$ is provable from $\Gamma$ in $\EDL$}, if there exists a finite subset $\Delta\subseteq\Gamma$ such that $\EDL\prov\bigwedge\Delta\rightarrow\varphi$, where $\bigwedge\Delta$ stands for the conjunction of the formulas in $\Delta$. A set of formulas $\Gamma$ is \emph{consistent} with respect to $\EDL$, if we have $\Gamma\notprov\bot$. And $\Gamma$ is \emph{maximally consistent}, if it is consistent and no proper extension $\Delta\supsetneq\Gamma$ is also consistent. The set of all maximally $\EDL$-consistent sets will be denoted by $\MCS$. The following properties of maximally consistent sets can be verified in the usual way (see, e.g., \cite[Lemma~3.1.2]{fagin:etal:2003}).

\begin{lemma}[Properties of Maximally Consistent Sets]\label{lem:properties:mcs}
Let $\Gamma\in\MCS$.
\begin{enumerate}
\item\label{lem:properties:mcs:i} For every formula $\varphi$, either $\varphi\in\Gamma$ or $\neg\varphi\in\Gamma$,
\item\label{lem:properties:mcs:ii} if $\EDL\prov\varphi$, then $\varphi\in\Gamma$,
\item\label{lem:properties:mcs:iii} if $\varphi\in\Gamma$ and $(\varphi\rightarrow\psi)\in\Gamma$, then $\psi\in\Gamma$,
\item\label{lem:properties:mcs:iv} $(\varphi\wedge\psi)\in\Gamma$ iff $\varphi\in\Gamma$ and $\psi\in\Gamma$.
\end{enumerate}
\end{lemma}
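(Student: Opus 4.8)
The plan is to derive all four items from a single auxiliary observation, namely that every maximally consistent set is \emph{deductively closed}: if $\Gamma\in\MCS$ and $\Gamma\prov\varphi$, then $\varphi\in\Gamma$. First I would establish this closure property. Suppose toward a contradiction that $\Gamma\prov\varphi$ but $\varphi\notin\Gamma$. Since $\varphi\notin\Gamma$, the set $\Gamma\cup\{\varphi\}$ is a \emph{proper} extension of $\Gamma$, so by maximality it is inconsistent, i.e.\ $\Gamma\cup\{\varphi\}\prov\bot$. Unfolding the definition of provability from a set, there is a finite $\Delta\subseteq\Gamma$ with $\EDL\prov(\bigwedge\Delta\wedge\varphi)\rightarrow\bot$, which by propositional reasoning (using that $\EDL$ contains all classical tautologies and is closed under modus ponens) yields $\EDL\prov\bigwedge\Delta\rightarrow\neg\varphi$, and hence $\Gamma\prov\neg\varphi$. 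Combining $\Gamma\prov\varphi$ and $\Gamma\prov\neg\varphi$ — by merging the two witnessing finite subsets and applying the tautology $(\alpha\wedge\neg\alpha)\rightarrow\bot$ — gives $\Gamma\prov\bot$, contradicting the consistency of $\Gamma$.

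Granting deductive closure, the four items follow quickly. For item~\ref{lem:properties:mcs:ii}, if $\EDL\prov\varphi$ then trivially $\Gamma\prov\varphi$ (take $\Delta=\emptyset$), so $\varphi\in\Gamma$. For item~\ref{lem:properties:mcs:i}, if $\varphi\notin\Gamma$ then maximality forces $\Gamma\cup\{\varphi\}$ to be inconsistent, which as in the closure argument produces $\Gamma\prov\neg\varphi$ and hence $\neg\varphi\in\Gamma$; thus at least one of $\varphi,\neg\varphi$ lies in $\Gamma$. For item~\ref{lem:properties:mcs:iii}, from $\varphi,(\varphi\rightarrow\psi)\in\Gamma$ together with the tautology $\EDL\prov(\varphi\wedge(\varphi\rightarrow\psi))\rightarrow\psi$, taking $\Delta=\{\varphi,\varphi\rightarrow\psi\}$ gives $\Gamma\prov\psi$, so $\psi\in\Gamma$ by closure. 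For item~\ref{lem:properties:mcs:iv}, the forward direction uses $\EDL\prov(\varphi\wedge\psi)\rightarrow\varphi$ and $\EDL\prov(\varphi\wedge\psi)\rightarrow\psi$ to conclude $\Gamma\prov\varphi$ and $\Gamma\prov\psi$, while the backward direction uses $\EDL\prov\varphi\rightarrow(\psi\rightarrow(\varphi\wedge\psi))$ with $\Delta=\{\varphi,\psi\}$ to obtain $\Gamma\prov(\varphi\wedge\psi)$; in both cases closure delivers membership.

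The argument is routine rather than difficult; the only delicate point is the careful bookkeeping around the relation $\Gamma\prov$, which is defined via finite subsets and their conjunctions. Concretely, the step that needs attention is the deduction-theorem-style equivalence between inconsistency of $\Gamma\cup\{\varphi\}$ and $\Gamma\prov\neg\varphi$, and the merging of two distinct witnessing finite subsets $\Delta_1,\Delta_2\subseteq\Gamma$ into a common finite subset $\Delta_1\cup\Delta_2$. Both manipulations rely solely on $\EDL$ including all propositional tautologies and being closed under modus ponens, so no modal axioms are needed; for this reason the identical proof transfers \emph{mutatis mutandis} to $\LocKDFourFive$ and $\LocKFourFive$.
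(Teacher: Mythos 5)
Your proof is correct and follows exactly the standard argument that the paper delegates to its reference (Fagin et al., Lemma~3.1.2) rather than spelling out: establish deductive closure from maximality together with the finite-subset/conjunction bookkeeping for $\Gamma\prov$, then read off the four items from propositional tautologies. No discrepancies with the paper's (omitted) proof.
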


Note that, in particular, every maximally $\EDL$-consistent set must include all instances of the $\KDFourFive$-axioms for the belief modalities $\B{a}$ and all instances of the $\SFive$-axioms for the knowledge modalities $\K{a}$ (this follows directly from part~(\ref{lem:properties:mcs:ii}) of Lemma~\ref{lem:properties:mcs}). Furthermore, it is not difficult to show that every $\EDL$-consistent set can be extended to a maximally $\EDL$-consistent set. The proof works in the same way as in \cite[Lemma~4.17]{blackburn:etal:2001} and \cite[Lemma~3.1.2]{fagin:etal:2003}.

\begin{lemma}[Lindenbaum's Lemma]\label{lem:lindenbaum}
For every consistent set of formulas $\Gamma$, there exists a maximally consistent set $\Delta\in\MCS$ such that $\Gamma\subseteq\Delta$.  
\end{lemma} 

Let us now turn to the construction of our canonical hypergraph model for $\EDL$. The basic idea is as follows: for every maximally consistent set $\Gamma$, our model will contain exactly one directed hyperedge $\edge_{\Gamma}$. Intuitively, this hyperedge $\edge_{\Gamma}$ represents the global state in which all formulas from $\Gamma$ are true and all other formulas are false. Each of the hyperedges $\edge_{\Gamma}$ will be composed of $n$ vertices (one for each agent) representing the local states of the agents in $\edge_{\Gamma}$. The vertex associated with an agent $a\in\Ag$ in a global state $\edge_{\Gamma}$ is itself taken to be a set of formulas $\locstate{\Gamma}{a}$ encoding two types of information: the values of the agent's local variables and the beliefs of the agent in $\edge_{\Gamma}$. Such a vertex $\locstate{\Gamma}{a}$ is stipulated to be in the tail of $\edge_{\Gamma}$, if the corresponding beliefs of the agent are consistent with the global state represented by $\edge_{\Gamma}$. Otherwise, $\locstate{\Gamma}{a}$ is taken to be in the head of $\edge_{\Gamma}$. The following notation will be useful for our purposes.     

\begin{definition}
Let $\Gamma$ be a set of formulas and let $a\in\Ag$ be an agent. We define: 
\begin{itemize}
\item $\modform{\Gamma}{\B{a}} := \{\varphi \mid \B{a}\varphi\in\Gamma\}$ and $\modform{\Gamma}{\K{a}} := \{\varphi \mid \K{a}\varphi\in\Gamma\}$,
\item $\Vform{a}{\Gamma} := \Gamma\cap\Var{a}$ and $\Bform{a}{\Gamma} := \{\B{a}\varphi \mid \B{a}\varphi\in\Gamma\}$,
\item $\locstate{\Gamma}{a} := \Vform{a}{\Gamma}\cup\Bform{a}{\Gamma}$,
\item $\Tset{\Gamma} := \{\locstate{\Gamma}{b} \mid \text{$b\in\Ag$ and $\modform{\Gamma}{\B{b}}\subseteq\Gamma$}\}$ and $\Hset{\Gamma} := \{\locstate{\Gamma}{b} \mid \text{$b\in\Ag$ and $\modform{\Gamma}{\B{b}}\not\subseteq\Gamma$}\}$.
\end{itemize}
\end{definition}

Our canonical hypergraph model for $\EDL$ can now be described as follows.  

\begin{definition}[Canonical Hypergraph Model]
The \emph{canonical hypergraph model} for $\EDL$ is the triple $\Mc = (\Gc,\COLc,\VALc)$ defined in the following way:     
\begin{itemize}
\item $\Gc = (\Vertices,\Edges)$ with $\Vertices = \{\locstate{\Gamma}{a} \mid a\in\Ag, \Gamma\in\MCS\}$, $\Edges = \{(\Tset{\Gamma},\Hset{\Gamma}) \mid \Gamma\in\MCS\}$, 
\item $\COLc:\Vertices(\Gc)\rightarrow\Ag$ is the coloring defined by $\COLc(\locstate{\Gamma}{a}) := a$,
\item $\VALc:\Vertices(\Gc)\rightarrow \mathcal{P}(\Atoms)$ is the valuation given by $\VALc(\locstate{\Gamma}{a}) := \Gamma\cap\Var{a}$.
\end{itemize}
For any $\Gamma\in\MCS$, we also define $\edge_{\Gamma} := (\Tset{\Gamma},\Hset{\Gamma})$.
\end{definition}

It is easy to see that $\Mc$ is indeed a well-defined hypergraph model. First of all, by construction, every hyperedge $\edge_{\Gamma}$ in $\Mc$ contains exactly one local state $\locstate{\Gamma}{a}$ for each agent $a\in\Ag$, so the function $\COLc$ cannot assign the same color to more than one vertex within an edge. Hence, $\COLc$ is in fact a \emph{coloring} of $\Gc$. Furthermore, if two maximally consistent sets $\Gamma,\Delta\in\MCS$ satisfy $\locstate{\Gamma}{a} = \locstate{\Delta}{a}$ for some agent $a\in\Ag$, then these sets must also agree on all local variables of this agent, so $\VALc$ is a well-defined function from vertices to sets of local variables. 

In order to obtain the desired completeness result, we must now essentially prove two things: on the one hand, we need to show that our canonical model for $\EDL$ is in fact simple, $n$-uniform and tail-complete. On the other hand, we need to establish a suitable variant of the \emph{truth lemma}: a formula is satisfied by a hyperedge $\edge_{\Gamma}$ in the model $\Mc$, just in case this formula is contained in the corresponding set $\Gamma$. To this end, we first prove the following facts. 

\begin{lemma}\label{lem:equality:of:mcs}
Let $\Gamma,\Delta\in\MCS$. If $\locstate{\Gamma}{a} = \locstate{\Delta}{a}$ for all $a\in\Ag$, then $\Gamma = \Delta$.     
\end{lemma}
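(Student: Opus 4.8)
The plan is to show that the two consequences of the hypothesis $\locstate{\Gamma}{a} = \locstate{\Delta}{a}$ are already enough to force $\varphi\in\Gamma \Leftrightarrow \varphi\in\Delta$ for every formula $\varphi\in\LKB$, whence $\Gamma=\Delta$. First I would unpack the hypothesis. Since $\Vform{a}{\Gamma} = \Gamma\cap\Var{a}$ consists of propositional atoms while $\Bform{a}{\Gamma}$ consists of formulas whose main connective is $\B{a}$, these two sets belong to disjoint syntactic categories; hence the equality $\Vform{a}{\Gamma}\cup\Bform{a}{\Gamma} = \Vform{a}{\Delta}\cup\Bform{a}{\Delta}$ splits into $\Vform{a}{\Gamma} = \Vform{a}{\Delta}$ and $\Bform{a}{\Gamma} = \Bform{a}{\Delta}$. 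Taking the union over all $a\in\Ag$, the former yields (i) $\Gamma\cap\Atoms = \Delta\cap\Atoms$, and the latter yields (ii) $\B{a}\varphi\in\Gamma \Leftrightarrow \B{a}\varphi\in\Delta$ for all $a\in\Ag$ and all $\varphi$.

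I would then prove $\varphi\in\Gamma \Leftrightarrow \varphi\in\Delta$ by induction on the structure of $\varphi$. The atomic case is immediate from (i), and the cases $\neg\psi$ and $\psi_1\wedge\psi_2$ follow routinely from the induction hypothesis together with the Boolean closure properties of maximally consistent sets (Lemma~\ref{lem:properties:mcs}, parts~(\ref{lem:properties:mcs:i}) and~(\ref{lem:properties:mcs:iv})). The belief case $\B{a}\psi$ requires no induction hypothesis at all: it is settled directly by (ii), since $\B{a}\psi$ is itself a belief formula.

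The only real obstacle is the knowledge case $\K{a}\psi$, since the hypothesis mentions no knowledge formulas and $\K{a}$ is not reducible to belief world-by-world. The key observation is that in $\EDL$ the equivalence $\K{a}\psi \leftrightarrow \B{a}\K{a}\psi$ is provable. For the left-to-right direction, $\FourK$ gives $\K{a}\psi\rightarrow\K{a}\K{a}\psi$, and the instance $\K{a}\K{a}\psi\rightarrow\B{a}\K{a}\psi$ of $\KIB$ then yields $\K{a}\psi\rightarrow\B{a}\K{a}\psi$. For the converse, $\FiveK$ gives $\neg\K{a}\psi\rightarrow\K{a}\neg\K{a}\psi$, while the instance of $\KIB$ for $\neg\K{a}\psi$ gives $\K{a}\neg\K{a}\psi\rightarrow\B{a}\neg\K{a}\psi$; contraposing the composite $\neg\K{a}\psi\rightarrow\B{a}\neg\K{a}\psi$ and combining it with the seriality principle $\B{a}\K{a}\psi\rightarrow\neg\B{a}\neg\K{a}\psi$ (a consequence of $\DB$) produces $\B{a}\K{a}\psi\rightarrow\K{a}\psi$. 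Because maximally consistent sets are closed under provable consequence (Lemma~\ref{lem:properties:mcs}, parts~(\ref{lem:properties:mcs:ii}) and~(\ref{lem:properties:mcs:iii})), this equivalence reduces the knowledge case to a belief case: $\K{a}\psi\in\Gamma$ iff $\B{a}\K{a}\psi\in\Gamma$, iff $\B{a}\K{a}\psi\in\Delta$ by (ii), iff $\K{a}\psi\in\Delta$. This completes the induction, and since $\Gamma$ and $\Delta$ then contain exactly the same formulas, we conclude $\Gamma=\Delta$.
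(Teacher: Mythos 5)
Your proposal is correct and follows essentially the same route as the paper's proof: the same splitting of $\locstate{\Gamma}{a} = \locstate{\Delta}{a}$ into the atomic and belief components, the same induction, and the same use of \axiom{4}{K}, \axiom{5}{K}, \unaryaxiom{KB} and \axiom{D}{B} to handle the $\K{a}\psi$ case. The only cosmetic difference is that you package the knowledge case as the provable equivalence $\K{a}\psi \leftrightarrow \B{a}\K{a}\psi$, whereas the paper runs the same reasoning directly inside the maximally consistent sets.
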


\begin{proof}
Let $\Gamma,\Delta\in\MCS$ and suppose that $\locstate{\Gamma}{a} = \locstate{\Delta}{a}$ for all $a\in\Ag$. Then, in particular, we must have $\Vform{a}{\Gamma} = \Vform{a}{\Delta}$ and $\Bform{a}{\Gamma} = \Bform{a}{\Delta}$ for each $a\in\Ag$. Let now $\varphi\in\LKB$ be arbitrary. Using induction on $\varphi$, we prove that $\varphi\in\Gamma\Leftrightarrow\varphi\in\Delta$.

If $\varphi$ is a propositional letter or a formula of the form $\B{a}\psi$, then the claim follows directly from the fact that $\Vform{a}{\Gamma} = \Vform{a}{\Delta}$ and $\Bform{a}{\Gamma} = \Bform{a}{\Delta}$ for all $a\in\Ag$. In the cases for $\neg$ and $\wedge$, we simply apply Lemma~\ref{lem:properties:mcs} and the induction hypothesis. Suppose now that $\varphi = \K{a}\psi$ for some $a\in\Ag$ and some $\psi\in\LKB$. For the left-to-right direction, assume $\K{a}\psi\in\Gamma$. By axioms $\FourK$ and $\KIB$, we then have $\B{a}\K{a}\psi\in\Gamma$. Together with $\Bform{a}{\Gamma} = \Bform{a}{\Delta}$, this yields $\B{a}\K{a}\psi\in\Delta$. Suppose for a contradiction that $\K{a}\psi\notin\Delta$. By Lemma~\ref{lem:properties:mcs}, this implies $\neg\K{a}\psi\in\Delta$, so we have $\B{a}\neg\K{a}\psi\in\Delta$ by axioms $\FiveK$ and $\KIB$. From $\B{a}\K{a}\psi\in\Delta$ and $\B{a}\neg\K{a}\psi\in\Delta$, we obtain $\B{a}(\K{a}\psi\wedge\neg\K{a}\psi)\in\Delta$. But by axiom $\DB$, we also have $\neg\B{a}(\K{a}\psi\wedge\neg\K{a}\psi)\in\Delta$, which is a contradiction to the consistency of $\Delta$. Hence, $\K{a}\psi\in\Delta$. The right-to-left direction is established in the same way.

This concludes the induction. Since $\varphi$ was arbitrary, it follows $\Gamma = \Delta$. 
\end{proof}

\begin{lemma}\label{lem:equality:local:states}
Let $\Gamma,\Delta\in\MCS$ be maximally consistent and let $a\in\Ag$ be an agent. 
\begin{enumerate}
\item\label{lem:equality:local:states:i} If $\modform{\Gamma}{\B{a}}\subseteq\Delta$, then $\locstate{\Gamma}{a} = \locstate{\Delta}{a}$.
\item\label{lem:equality:local:states:ii} If $\modform{\Gamma}{\K{a}}\subseteq\Delta$, then $\locstate{\Gamma}{a} = \locstate{\Delta}{a}$.
\end{enumerate}
\end{lemma}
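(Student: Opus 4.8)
The plan is to reduce each part to two separate set-equalities. Since by definition $\locstate{\Gamma}{a} = \Vform{a}{\Gamma}\cup\Bform{a}{\Gamma}$ and $\locstate{\Delta}{a} = \Vform{a}{\Delta}\cup\Bform{a}{\Delta}$, and since the atoms collected in $\Vform{a}{\Gamma}$ and the $\B{a}$-prefixed formulas collected in $\Bform{a}{\Gamma}$ are of visibly different syntactic shape, it suffices to establish $\Vform{a}{\Gamma} = \Vform{a}{\Delta}$ and $\Bform{a}{\Gamma} = \Bform{a}{\Delta}$ independently. I would treat the belief component using the relevant introspection axioms and the variable component using the locality axiom $\Loc$.

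For part~(\ref{lem:equality:local:states:i}), assume $\modform{\Gamma}{\B{a}}\subseteq\Delta$. For the belief component I show $\B{a}\varphi\in\Gamma \Leftrightarrow \B{a}\varphi\in\Delta$. From left to right, positive introspection $\FourB$ turns $\B{a}\varphi\in\Gamma$ into $\B{a}\B{a}\varphi\in\Gamma$, so that $\B{a}\varphi\in\modform{\Gamma}{\B{a}}\subseteq\Delta$. For the converse I assume $\B{a}\varphi\notin\Gamma$, use maximal consistency to get $\neg\B{a}\varphi\in\Gamma$, apply negative introspection $\FiveB$ to obtain $\B{a}\neg\B{a}\varphi\in\Gamma$, and conclude $\neg\B{a}\varphi\in\Delta$, whence $\B{a}\varphi\notin\Delta$. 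For the variable component, if $p\in\Var{a}$ lies in $\Gamma$, then $\Loc$ yields $\B{a}p\in\Gamma$, so $p\in\modform{\Gamma}{\B{a}}\subseteq\Delta$; the case $\neg p\in\Gamma$ is symmetric using the second conjunct of $\Loc$. Notably, this part uses neither $\DB$ nor any knowledge axiom, which is precisely why the same argument carries over \emph{mutatis mutandis} to $\LocKFourFive$ and $\LocKDFourFive$.

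For part~(\ref{lem:equality:local:states:ii}), assume $\modform{\Gamma}{\K{a}}\subseteq\Delta$. The belief component runs exactly as before, except that the strong introspection axioms replace the ordinary ones: $\SPI$ sends $\B{a}\varphi\in\Gamma$ to $\K{a}\B{a}\varphi\in\Gamma$, placing $\B{a}\varphi$ in $\modform{\Gamma}{\K{a}}\subseteq\Delta$, and $\SNI$ handles $\neg\B{a}\varphi$ symmetrically. The main obstacle is the variable component. From $p\in\Gamma$ and $\Loc$ I again obtain $\B{a}p\in\Gamma$, and via $\SPI$ I can push $\B{a}p$ into $\Delta$; but to conclude $p\in\Delta$ I cannot appeal to factivity of belief, since belief is non-factive. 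The remedy is to first establish, as an auxiliary theorem, the \emph{veracity of belief at local atoms}: $\EDL\prov\B{a}p\rightarrow p$ and $\EDL\prov\B{a}\neg p\rightarrow\neg p$ for every $p\in\Var{a}$. This is derivable from $\Loc$, normality $\KB$, and consistency $\DB$ alone: the second conjunct of $\Loc$ gives $\neg p\rightarrow\B{a}\neg p$, normality gives $\B{a}p\wedge\B{a}\neg p\rightarrow\B{a}(p\wedge\neg p)$, and $\DB$ refutes the consequent, so $\B{a}\neg p\rightarrow\neg\B{a}p$; contraposition then yields $\B{a}p\rightarrow p$, and the symmetric derivation gives $\B{a}\neg p\rightarrow\neg p$. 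With this theorem in hand, $\B{a}p\in\Delta$ forces $p\in\Delta$, and the case $\neg p\in\Gamma$ is again symmetric, which completes the variable component.

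Finally, I would record that part~(\ref{lem:equality:local:states:i}) is in fact subsumed by part~(\ref{lem:equality:local:states:ii}): since $\KIB$ yields $\modform{\Gamma}{\K{a}}\subseteq\modform{\Gamma}{\B{a}}$, the hypothesis of~(\ref{lem:equality:local:states:i}) implies that of~(\ref{lem:equality:local:states:ii}) while the conclusions coincide. Presenting both directly is nonetheless cleaner, as part~(\ref{lem:equality:local:states:i}) avoids the detour through the auxiliary veracity theorem. I expect the only step demanding genuine care to be the normality aggregation $\B{a}p\wedge\B{a}\neg p\rightarrow\B{a}(p\wedge\neg p)$, which tacitly relies on a derived $\B{a}$-necessitation rule obtained from $\K{a}$-necessitation together with $\KIB$.
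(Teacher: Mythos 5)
Your proof is correct and follows essentially the same route as the paper: splitting $\locstate{\Gamma}{a}$ into its $\Vform{a}{\cdot}$ and $\Bform{a}{\cdot}$ components, handling the former with \Loc{} and the latter with \FourB/\FiveB{} (resp.\ \SPI/\SNI), and invoking \DB{} for the variable component of part~(\ref{lem:equality:local:states:ii}) --- exactly the extra ingredient the paper's sketch of that part alludes to. Your explicit derivation of $\B{a}p\rightarrow p$ for $p\in\Var{a}$ and the observation that part~(\ref{lem:equality:local:states:i}) is subsumed via \KIB{} are accurate additions, not deviations.
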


\begin{proof}
We only prove part~(\ref{lem:equality:local:states:i}). Let $\Gamma,\Delta\in\MCS$ be maximally consistent, let $a\in\Ag$ be an agent and suppose that $\modform{\Gamma}{\B{a}}\subseteq\Delta$. We first prove $\Vform{a}{\Gamma} = \Vform{a}{\Delta}$. For the left-to-right inclusion, let $p\in\Vform{a}{\Gamma}$ be arbitrary, so $p\in\Gamma$ and $p\in\Var{a}$. By the first conjunct of the locality axiom $\Loc$, this yields $\B{a}p\in\Gamma$. But then, by assumption, we also have $p\in\Delta$ and thus $p\in\Vform{a}{\Delta}$. For the right-to-left inclusion, let $p\in\Vform{a}{\Delta}$ be arbitrary, so $p\in\Delta$ and $p\in\Var{a}$. Suppose for a contradiction that $p\notin\Gamma$. Then, clearly, we have $\neg p\in\Gamma$, so it follows $\B{a}\neg p\in\Gamma$ by the second conjunct of $\Loc$. By assumption, we now obtain $\neg p\in\Delta$, which is a contradiction to the consistency of $\Delta$. Therefore, we have $p\in\Gamma$. 

Next, we prove that $\Bform{a}{\Gamma} = \Bform{a}{\Delta}$. For the left-to-right inclusion, let $\varphi$ be an arbitrary formula and assume that $\B{a}\varphi\in\Gamma$. By axiom $\FourB$, this implies $\B{a}\B{a}\varphi\in\Gamma$ and therefore $\B{a}\varphi\in\modform{\Gamma}{\B{a}}$. By assumption, we thus have $\B{a}\varphi\in\Delta$. For the converse inclusion, let $\B{a}\varphi\in\Delta$ be arbitrary and suppose for a contradiction that $\B{a}\varphi\notin\Gamma$. Then, clearly, we have $\neg\B{a}\varphi\in\Gamma$, so it follows $\B{a}\neg\B{a}\varphi\in\Gamma$ by axiom $\FiveB$. By assumption, we now obtain $\neg\B{a}\varphi\in\Delta$, which is a contradiction to the consistency of $\Delta$. Hence, we have $\B{a}\varphi\in\Gamma$.

From $\Vform{a}{\Gamma} = \Vform{a}{\Delta}$ and $\Bform{a}{\Gamma} = \Bform{a}{\Delta}$, it now follows $\locstate{\Gamma}{a} = \locstate{\Delta}{a}$. The proof of part (\ref{lem:equality:local:states:ii}) is similar. However, for this part, one should also use the fact that maximally con\-sis\-tent sets include all instances of axiom scheme $\DB$.   
\end{proof}

\begin{lemma}\label{lem:accessibility:canonical:model}
Let $\Gamma,\Delta\in\MCS$ and let $a\in\Ag$. In the canonical model $\Mc$, we have: 
\begin{enumerate}
\item\label{lem:accessibility:canonical:model:i} $\edge_{\Gamma} \GBrel{a} \edge_{\Delta} \Leftrightarrow \modform{\Gamma}{\B{a}}\subseteq\Delta$,
\item\label{lem:accessibility:canonical:model:ii} $\edge_{\Gamma} \GKrel{a} \edge_{\Delta} \Leftrightarrow \modform{\Gamma}{\K{a}}\subseteq\Delta$.
\end{enumerate}
\end{lemma}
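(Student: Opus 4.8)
The plan is to reduce both hypergraph accessibility conditions to simple statements about the unique $a$-colored vertex inside each hyperedge, and then to translate these into conditions on the sets $\modform{\Gamma}{\B{a}}$ and $\modform{\Gamma}{\K{a}}$ using the auxiliary Lemmas~\ref{lem:equality:of:mcs} and~\ref{lem:equality:local:states}. First I would record the key bookkeeping observation: the belief-part $\Bform{a}{\Gamma}$ of every local state is nonempty---for instance it contains $\B{a}\neg\bot$, which is $\EDL$-provable via necessitation and $\KIB$---and consists exclusively of $\B{a}$-indexed formulas. Consequently $\locstate{\Gamma}{a} = \locstate{\Delta}{b}$ forces $a = b$, so the unique $a$-colored vertex of $\undir{\edge_{\Gamma}}$ is exactly $\locstate{\Gamma}{a}$. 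Unwinding the definitions of $\GBrel{a}$ and $\GKrel{a}$ then yields $\edge_{\Gamma} \GBrel{a} \edge_{\Delta} \Leftrightarrow \locstate{\Gamma}{a} \in \tail{\edge_{\Delta}}$ and $\edge_{\Gamma} \GKrel{a} \edge_{\Delta} \Leftrightarrow \locstate{\Gamma}{a} = \locstate{\Delta}{a}$.

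For part~(\ref{lem:accessibility:canonical:model:i}) I would argue as follows. Assume first $\locstate{\Gamma}{a} \in \tail{\edge_{\Delta}} = \Tset{\Delta}$. By the color-determinacy just noted, the witnessing element of $\Tset{\Delta}$ must be $\locstate{\Delta}{a}$, so $\locstate{\Gamma}{a} = \locstate{\Delta}{a}$ and $\modform{\Delta}{\B{a}}\subseteq\Delta$. The equality of local states gives $\Bform{a}{\Gamma} = \Bform{a}{\Delta}$, whence $\modform{\Gamma}{\B{a}} = \modform{\Delta}{\B{a}} \subseteq \Delta$. For the converse, assume $\modform{\Gamma}{\B{a}} \subseteq \Delta$. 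Lemma~\ref{lem:equality:local:states}(\ref{lem:equality:local:states:i}) gives $\locstate{\Gamma}{a} = \locstate{\Delta}{a}$, so again $\modform{\Delta}{\B{a}} = \modform{\Gamma}{\B{a}} \subseteq \Delta$; this is precisely the membership condition placing $\locstate{\Delta}{a} = \locstate{\Gamma}{a}$ in $\Tset{\Delta}$, as required.

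For part~(\ref{lem:accessibility:canonical:model:ii}), the right-to-left direction is immediate from Lemma~\ref{lem:equality:local:states}(\ref{lem:equality:local:states:ii}). For left-to-right, suppose $\locstate{\Gamma}{a} = \locstate{\Delta}{a}$, so $\Bform{a}{\Gamma} = \Bform{a}{\Delta}$, and let $\K{a}\varphi \in \Gamma$. Exactly as in the $\K$-case of Lemma~\ref{lem:equality:of:mcs}, axioms $\FourK$ and $\KIB$ give $\B{a}\K{a}\varphi \in \Gamma$, hence $\B{a}\K{a}\varphi \in \Delta$; a short consistency argument using $\FiveK$, $\KIB$ and $\DB$ then forces $\K{a}\varphi \in \Delta$, and $\TK$ yields $\varphi \in \Delta$. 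Thus $\modform{\Gamma}{\K{a}} \subseteq \Delta$.

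The routine parts are the definitional unwindings and the transfer of belief-sets between coincident local states. The one genuinely delicate point is the converse of part~(\ref{lem:accessibility:canonical:model:i}): Lemma~\ref{lem:equality:local:states}(\ref{lem:equality:local:states:i}) only tells us that the local states coincide, and it still has to be checked that the shared vertex actually lies in the tail $\Tset{\Delta}$ rather than in the head $\Hset{\Delta}$. This is where one must observe that coincidence of the local states already forces $\modform{\Delta}{\B{a}} = \modform{\Gamma}{\B{a}}$, so the tail-membership condition is inherited directly from the hypothesis. The forward direction of part~(\ref{lem:accessibility:canonical:model:ii}) is the only place where re-running the $\DB$/$\FiveK$ consistency argument of Lemma~\ref{lem:equality:of:mcs} is needed.
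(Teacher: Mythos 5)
Your proof is correct and follows essentially the same route as the paper: reduce both relations to statements about the unique $a$-colored vertex, then invoke Lemma~\ref{lem:equality:local:states} and the $\FourK$/$\FiveK$/$\KIB$/$\DB$ argument for the $\K{a}$ case. Your treatment of the converse of part~(\ref{lem:accessibility:canonical:model:i}) is a slight (valid) streamlining — the paper re-derives $\modform{\Delta}{\B{a}}\subseteq\Delta$ via a separate $\FiveB$ consistency argument, whereas you obtain it directly from $\Bform{a}{\Gamma}=\Bform{a}{\Delta}$ — and your explicit justification that a local state determines its agent (via $\B{a}\neg\bot\in\Bform{a}{\Gamma}$) is a useful detail the paper leaves implicit.
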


\begin{proof}
We first prove part~(\ref{lem:accessibility:canonical:model:i}). For the left-to-right direction, assume that $\edge_{\Gamma} \GBrel{a} \edge_{\Delta}$. One readily sees that this yields $\locstate{\Gamma}{a} = \locstate{\Delta}{a}$ as well as $\locstate{\Delta}{a}\in\undir{\edge_{\Gamma}}$ and $\locstate{\Delta}{a}\in\tail{\edge_{\Delta}}$. From $\locstate{\Delta}{a}\in\tail{\edge_{\Delta}}$, it follows $\modform{\Delta}{\B{a}}\subseteq\Delta$ by definition of $\edge_{\Delta}$. And since $\locstate{\Gamma}{a} = \locstate{\Delta}{a}$, we also have $\Bform{a}{\Gamma} = \Bform{a}{\Delta}$ and thus $\modform{\Gamma}{\B{a}} = \modform{\Delta}{\B{a}}$. Therefore, $\modform{\Gamma}{\B{a}}\subseteq\Delta$. For the converse direction, suppose that $\modform{\Gamma}{\B{a}}\subseteq\Delta$. By Lemma~\ref{lem:equality:local:states}~(\ref{lem:equality:local:states:i}), we then have $\locstate{\Gamma}{a} = \locstate{\Delta}{a}$. We now prove $\modform{\Delta}{\B{a}}\subseteq\Delta$. To this end, let $\varphi\in\modform{\Delta}{\B{a}}$ be arbitrary and suppose for a contradiction that $\B{a}\varphi\notin\Gamma$. By Lemma~\ref{lem:properties:mcs}, this yields $\neg\B{a}\varphi\in\Gamma$, so it follows $\B{a}\neg\B{a}\varphi\in\Gamma$ by axiom $\FiveB$. Hence, by assumption, $\neg\B{a}\varphi\in\Delta$. But since $\varphi\in\modform{\Delta}{\B{a}}$, we also have $\B{a}\varphi\in\Delta$, in contradiction to the consistency of $\Delta$. Therefore, we must have $\B{a}\varphi\in\Gamma$. By assumption, this yields $\varphi\in\Delta$. Because $\varphi$ was an arbitrary formula from $\modform{\Delta}{\B{a}}$, this shows that $\modform{\Delta}{\B{a}}\subseteq\Delta$. As a consequence, the vertex $\locstate{\Gamma}{a} = \locstate{\Delta}{a}$ must satisfy $\COLc(\locstate{\Gamma}{a}) = a$, $\locstate{\Gamma}{a}\in\undir{\edge_{\Gamma}}$ and $\locstate{\Gamma}{a}\in\tail{\edge_{\Delta}}$, so it follows $\edge_{\Gamma} \GBrel{a} \edge_{\Delta}$. 

Next, we prove part~(\ref{lem:accessibility:canonical:model:ii}). For the left-to-right direction, assume that $\edge_{\Gamma} \GKrel{a} \edge_{\Delta}$. Then, clearly, $\locstate{\Gamma}{a} = \locstate{\Delta}{a}$. In order to prove $\modform{\Gamma}{\K{a}}\subseteq\Delta$, let $\varphi\in\modform{\Gamma}{\K{a}}$ be arbitrary, so $\K{a}\varphi\in\Gamma$. By axioms $\FourK$ and $\KIB$, this yields $\B{a}\K{a}\varphi\in\Gamma$. And since $\locstate{\Gamma}{a} = \locstate{\Delta}{a}$, it also holds $\Bform{a}{\Gamma} = \Bform{a}{\Delta}$, so $\B{a}\K{a}\varphi\in\Delta$. Towards a contradiction, suppose now that $\varphi\notin\Delta$. By axiom $\TK$, we then have $\K{a}\varphi\notin\Delta$, so it follows $\neg\K{a}\varphi\in\Delta$ by Lemma~\ref{lem:properties:mcs}. By axioms $\FiveK$ and $\KIB$, this implies $\B{a}\neg\K{a}\varphi\in\Delta$. From $\B{a}\K{a}\varphi\in\Delta$ and $\B{a}\neg\K{a}\varphi\in\Delta$, we can now deduce $\B{a}(\K{a}\varphi\wedge\neg\K{a}\varphi)\in\Delta$. But by axiom $\DB$, we also have $\neg\B{a}(\K{a}\varphi\wedge\neg\K{a}\varphi)\in\Delta$, which is a contradiction to the consistency of $\Delta$. Therefore, we have $\varphi\in\Delta$. Because $\varphi\in\modform{\Gamma}{\K{a}}$ was arbitrary, this shows that $\modform{\Gamma}{\K{a}}\subseteq\Delta$. For the converse direction, suppose that $\modform{\Gamma}{\K{a}}\subseteq\Delta$. By Lemma~\ref{lem:equality:local:states}~(\ref{lem:equality:local:states:ii}), we then have $\locstate{\Gamma}{a} = \locstate{\Delta}{a}$, so it follows $\edge_{\Gamma} \GKrel{a} \edge_{\Delta}$. 
\end{proof}

We can now prove that $\Mc$ is in fact simple, $n$-uniform and tail-complete. Moreover, a formula $\varphi$ is satisfied by an edge $\edge_{\Gamma}$ in $\Mc$ if and only if $\varphi\in\Gamma$. 

\begin{lemma}\label{lem:properties:canonical:model}
The hypergraph $\Gc$ is simple, $n$-uniform and tail-complete.     
\end{lemma}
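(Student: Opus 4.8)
The plan is to prove the three properties separately, relying on the structure of the canonical model and the lemmas already established.

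First I would handle \emph{$n$-uniformity}. By construction, each hyperedge $\edge_{\Gamma} = (\Tset{\Gamma},\Hset{\Gamma})$ is formed from the local states $\locstate{\Gamma}{a}$ ranging over all agents $a\in\Ag$, partitioned into tail and head according to whether $\modform{\Gamma}{\B{a}}\subseteq\Gamma$. Thus $\undir{\edge_{\Gamma}} = \{\locstate{\Gamma}{a} \mid a\in\Ag\}$, and since the coloring $\COLc$ assigns the distinct colors $a$ to the distinct vertices $\locstate{\Gamma}{a}$, this set has exactly $n = |\Ag|$ elements. (One must note in passing that for distinct agents $a\neq b$ we have $\locstate{\Gamma}{a}\neq\locstate{\Gamma}{b}$, which is immediate since these sets contain local variables and belief formulas belonging to different agents.) Hence $|\undir{\edge_{\Gamma}}| = n$ for every $\Gamma$, giving $n$-uniformity.

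Next I would establish \emph{tail-completeness}, i.e.\ that every vertex lies in the tail of some hyperedge. A vertex has the form $\locstate{\Gamma}{a}$ for some $\Gamma\in\MCS$ and $a\in\Ag$. The natural candidate edge is $\edge_{\Gamma}$ itself: I would show $\locstate{\Gamma}{a}\in\Tset{\Gamma}$, which by definition amounts to checking $\modform{\Gamma}{\B{a}}\subseteq\Gamma$. This is where the \emph{axioms of belief} come in: using $\edge_{\Gamma}\GBrel{a}\edge_{\Gamma}$ would follow from this inclusion via Lemma~\ref{lem:accessibility:canonical:model}, but more directly one argues that if $\B{a}\varphi\in\Gamma$ then $\varphi\in\Gamma$ fails in general for mere belief---so instead I would reduce to showing $\edge_{\Gamma}\GBrel{a}\edge_{\Delta}$ for \emph{some} $\Delta$. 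Concretely, I would take the set $\modform{\Gamma}{\B{a}}$, verify it is consistent using axiom $\DB$ (seriality of belief guarantees $\modform{\Gamma}{\B{a}}\notprov\bot$), extend it by Lindenbaum's Lemma to some $\Delta\in\MCS$, and conclude $\edge_{\Gamma}\GBrel{a}\edge_{\Delta}$ by Lemma~\ref{lem:accessibility:canonical:model}~(\ref{lem:accessibility:canonical:model:i}). Then $\locstate{\Gamma}{a} = \locstate{\Delta}{a}$ by Lemma~\ref{lem:equality:local:states}, and $\locstate{\Delta}{a}\in\Tset{\Delta}$ because $\modform{\Delta}{\B{a}}\subseteq\Delta$ holds as shown inside the proof of Lemma~\ref{lem:accessibility:canonical:model}. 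Thus the original vertex lies in $\tail{\edge_{\Delta}}$.

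Finally, for \emph{simplicity}, I must show that for distinct hyperedges $\edge_{\Gamma}\neq\edge_{\Delta}$ we have $\undir{\edge_{\Gamma}}\not\subseteq\undir{\edge_{\Delta}}$. Suppose $\undir{\edge_{\Gamma}}\subseteq\undir{\edge_{\Delta}}$. Since both sets have exactly $n$ elements (by $n$-uniformity) and are finite of equal size, the inclusion forces $\undir{\edge_{\Gamma}} = \undir{\edge_{\Delta}}$. Matching vertices by color then gives $\locstate{\Gamma}{a} = \locstate{\Delta}{a}$ for every $a\in\Ag$, whence $\Gamma = \Delta$ by Lemma~\ref{lem:equality:of:mcs}, so $\edge_{\Gamma} = \edge_{\Delta}$, contradicting distinctness. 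I expect the \textbf{main obstacle} to be the tail-completeness argument: one must resist the temptation to use reflexivity of $\GBrel{a}$ (which would require factivity and fails for belief) and instead invoke the consistency of $\modform{\Gamma}{\B{a}}$---crucially depending on axiom $\DB$---together with Lindenbaum's Lemma to produce a genuine successor edge whose tail contains the required vertex.
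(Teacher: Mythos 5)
Your proposal is correct and follows essentially the same route as the paper: uniformity from one local state per agent, simplicity via $n$-uniformity plus Lemma~\ref{lem:equality:of:mcs}, and tail-completeness by showing $\modform{\Gamma}{\B{a}}$ is consistent using axiom \DB{}, extending it to some $\Delta$ by Lindenbaum's Lemma, and invoking Lemma~\ref{lem:accessibility:canonical:model}. Your explicit warning against assuming reflexivity of $\GBrel{a}$ correctly identifies the one place where the doxastic (non-factive) setting forces the detour through a successor edge, exactly as in the paper's proof.
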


\begin{proof}
\emph{Uniformity.} Since we have $\locstate{\Gamma}{a}\neq\locstate{\Gamma}{b}$ for all $\Gamma\in\MCS$ and all $a,b\in\Ag$ with $a\neq b$, every edge $\edge_{\Gamma}$ in $\Gc$ contains exactly $n$ vertices, so $\Gc$ is $n$-uniform. 

\emph{Simplicity.} Towards a contradiction, suppose that there are $\Gamma,\Delta\in\MCS$ with $\edge_{\Gamma}\neq\edge_{\Delta}$ such that $\undir{\edge_{\Gamma}}\subseteq\undir{\edge_{\Delta}}$. By the $n$-uniformity of $\Gc$, this is only possible if $\undir{\edge_{\Gamma}} = \undir{\edge_{\Delta}}$, so we may conclude that it holds $\locstate{\Gamma}{a} = \locstate{\Delta}{a}$ for all $a\in\Ag$. By Lemma~\ref{lem:equality:of:mcs}, we now obtain $\Gamma=\Delta$. But this implies $\edge_{\Gamma} = \edge_{\Delta}$, which is a contradiction to our assumption. Therefore, $\Gc$ must be simple. 

\emph{Tail-completeness.} Let $\locstate{\Gamma}{a}\in\Vertices(\Gc)$ be arbitrary. We prove that there exists a hyperedge $\edge_{\Delta}$ with $\locstate{\Gamma}{a}\in\tail{\edge_{\Delta}}$. Suppose for a contradiction that $\modform{\Gamma}{\B{a}}$ is not consistent, so there is a finite subset $\Pi\subseteq\modform{\Gamma}{\B{a}}$ with $\EDL\prov\bigwedge\Pi\rightarrow\bot$. Let $\varphi := \bigwedge\Pi$. Then $\EDL\prov\neg\varphi$, so we obtain $\EDL\prov\B{a}\neg\varphi$ by necessitation and axiom $\KIB$. Hence, by Lemma~\ref{lem:properties:mcs}~(\ref{lem:properties:mcs:ii}), we have $\B{a}\neg\varphi\in\Gamma$. Since $\Pi\subseteq\modform{\Gamma}{\B{a}}$, we also have $\B{a}\psi\in\Gamma$ for all $\psi\in\Pi$, so it follows $\B{a}\varphi\in\Gamma$. Together with $\B{a}\neg\varphi\in\Gamma$, this yields $\B{a}(\varphi\wedge\neg\varphi)\in\Gamma$. But by axiom $\DB$, we have $\neg\B{a}(\varphi\wedge\neg\varphi)\in\Gamma$, which is a contradiction to the consistency of $\Gamma$. Hence, $\modform{\Gamma}{\B{a}}$ is consistent, so it can be extended to a maximally consistent set $\Delta$ by Lemma~\ref{lem:lindenbaum}. Since $\modform{\Gamma}{\B{a}}\subseteq\Delta$, we have $\edge_{\Gamma}\GBrel{a}\edge_{\Delta}$ by Lemma~\ref{lem:accessibility:canonical:model}~(\ref{lem:accessibility:canonical:model:i}). But then, clearly, $\locstate{\Gamma}{a}$ satisfies $\locstate{\Gamma}{a}\in\tail{\edge_{\Delta}}$. Because $\locstate{\Gamma}{a}\in\Vertices(\Gc)$ was arbitrary, this shows that $\Gc$ is tail-complete. 
\end{proof}

\begin{lemma}[Truth Lemma]\label{lem:truth:lemma}
For every maximally consistent set $\Gamma\in\MCS$ and for every formula $\varphi\in\LKB$, we have: $\Mc,\edge_{\Gamma}\semH\varphi$ iff $\varphi\in\Gamma$.
\end{lemma}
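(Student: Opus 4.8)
The plan is to establish the biconditional by induction on the structure of $\varphi$, treating the propositional letters as the base case and the two Boolean connectives and the two modalities as the inductive cases. For the base case $\varphi = p$ with $p\in\Atoms$, I would simply unfold the definitions: since $\undir{\edge_{\Gamma}} = \Tset{\Gamma}\cup\Hset{\Gamma} = \{\locstate{\Gamma}{b}\mid b\in\Ag\}$ and $\VALc(\locstate{\Gamma}{b}) = \Gamma\cap\Var{b}$, we obtain $\VALc(\edge_{\Gamma}) = \bigcup_{b\in\Ag}(\Gamma\cap\Var{b}) = \Gamma\cap\Atoms$. Hence $\Mc,\edge_{\Gamma}\semH p$ holds iff $p\in\Gamma\cap\Atoms$, which is equivalent to $p\in\Gamma$ because $p\in\Atoms$. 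The cases $\varphi = \neg\psi$ and $\varphi = \psi_1\wedge\psi_2$ are routine and follow by combining the induction hypothesis with parts~(\ref{lem:properties:mcs:i}) and~(\ref{lem:properties:mcs:iv}) of Lemma~\ref{lem:properties:mcs}.

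The substance of the argument lies in the modal cases, and I would present $\varphi = \B{a}\psi$ in full, the case of $\K{a}\psi$ being entirely analogous. Unfolding the satisfaction clause and using that every hyperedge of $\Mc$ is of the form $\edge_{\Delta}$ for some $\Delta\in\MCS$, one sees that $\Mc,\edge_{\Gamma}\semH\B{a}\psi$ holds iff $\Mc,\edge_{\Delta}\semH\psi$ for all $\Delta\in\MCS$ with $\edge_{\Gamma}\GBrel{a}\edge_{\Delta}$. Rewriting the accessibility condition via Lemma~\ref{lem:accessibility:canonical:model}~(\ref{lem:accessibility:canonical:model:i}) and rewriting $\Mc,\edge_{\Delta}\semH\psi$ as $\psi\in\Delta$ via the induction hypothesis, the claim reduces to the standard modal fact that $\B{a}\psi\in\Gamma$ iff $\psi\in\Delta$ for every $\Delta\in\MCS$ with $\modform{\Gamma}{\B{a}}\subseteq\Delta$.

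The forward direction of this fact is immediate, since $\B{a}\psi\in\Gamma$ gives $\psi\in\modform{\Gamma}{\B{a}}\subseteq\Delta$. For the converse I would argue contrapositively: assuming $\B{a}\psi\notin\Gamma$, I show that $\modform{\Gamma}{\B{a}}\cup\{\neg\psi\}$ is consistent, so that Lindenbaum's Lemma (Lemma~\ref{lem:lindenbaum}) yields a $\Delta\in\MCS$ with $\modform{\Gamma}{\B{a}}\subseteq\Delta$ and $\neg\psi\in\Delta$, whence $\psi\notin\Delta$, providing the required witness. For the consistency, suppose otherwise; then some finite $\Pi\subseteq\modform{\Gamma}{\B{a}}$ satisfies $\EDL\prov\bigwedge\Pi\rightarrow\psi$. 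Applying necessitation and then axiom $\KIB$ to obtain belief-necessitation as a derived rule, and combining this with axiom $\KB$, I derive $\EDL\prov\B{a}\bigwedge\Pi\rightarrow\B{a}\psi$. Since $\B{a}\chi\in\Gamma$ for each $\chi\in\Pi$, distribution of $\B{a}$ over conjunctions gives $\B{a}\bigwedge\Pi\in\Gamma$, and therefore $\B{a}\psi\in\Gamma$ by Lemma~\ref{lem:properties:mcs}~(\ref{lem:properties:mcs:iii}), contradicting the assumption. The case $\varphi = \K{a}\psi$ runs identically, invoking Lemma~\ref{lem:accessibility:canonical:model}~(\ref{lem:accessibility:canonical:model:ii}), ordinary necessitation, and the distribution axiom $\KK$ in place of $\KB$.

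I expect the existence-lemma half of the modal cases to be the main obstacle, as it is the only step that genuinely uses the proof theory of $\EDL$ (necessitation together with $\KIB$ and the $\mathrm{K}$-axioms), rather than merely the bridge results already established. Once the accessibility conditions have been translated through Lemma~\ref{lem:accessibility:canonical:model}, the remaining reasoning is precisely the classical canonical-model argument, so no further structural properties of the hypergraph model are required at this stage; those were already discharged in Lemma~\ref{lem:properties:canonical:model}.
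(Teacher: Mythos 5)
Your proposal is correct and follows essentially the same route as the paper: induction on $\varphi$, with the modal case handled by translating $\GBrel{a}$ into the inclusion $\modform{\Gamma}{\B{a}}\subseteq\Delta$ via Lemma~\ref{lem:accessibility:canonical:model}, and the existence half discharged by showing $\modform{\Gamma}{\B{a}}\cup\{\neg\psi\}$ consistent using necessitation, \KIB{} and \KB{} together with Lindenbaum's Lemma. The only difference is presentational — you run the hard direction contrapositively where the paper assumes satisfaction and derives a contradiction — but the underlying argument is identical.
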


\begin{proof}
By induction on the structure of $\varphi$. We only consider the case in which $\varphi$ is of the form $\varphi = \B{a}\psi$ for some $a\in\Ag$. For the left-to-right direction, assume that $\Mc,\edge_{\Gamma}\semH\B{a}\psi$. Suppose for a contradiction that the set $(\modform{\Gamma}{\B{a}})\cup\{\neg\psi\}$ is consistent. Then, by Lemma~\ref{lem:lindenbaum}, it can be extended to a maximally consistent set $\Delta$. Since $\neg\psi\in\Delta$, we have $\psi\notin\Delta$, so it follows $\Mc,\edge_{\Delta}\not\semH\psi$ by the induction hypothesis. And from $\modform{\Gamma}{\B{a}}\subseteq\Delta$, we obtain $\edge_{\Gamma} \GBrel{a} \edge_{\Delta}$ by Lemma~\ref{lem:accessibility:canonical:model}~(\ref{lem:accessibility:canonical:model:i}). But then $\Mc,\edge_{\Gamma}\not\semH\B{a}\psi$, in contradiction to our assumption. Therefore, $(\modform{\Gamma}{\B{a}})\cup\{\neg\psi\}$ is not consistent, so there is a finite subset $\Sigma\subseteq (\modform{\Gamma}{\B{a}})\cup\{\neg\psi\}$ such that $\EDL\prov \bigwedge\Sigma\rightarrow\bot$. Without loss of generality, we may assume that $\Sigma$ is of the form $\Sigma = \Pi\cup\{\neg\psi\}$, where $\Pi\subseteq\modform{\Gamma}{\B{a}}$. By propositional reasoning, we now obtain $\EDL\prov\pi\rightarrow\psi$, where $\pi := \bigwedge\Pi$. This yields $\EDL\prov\B{a}\pi\rightarrow\B{a}\psi$ by necessitation and axioms $\KIB$ and $\KB$, so we may conclude $(\B{a}\pi\rightarrow\B{a}\psi)\in\Gamma$ by Lemma~\ref{lem:properties:mcs}~(\ref{lem:properties:mcs:ii}). Since $\Pi\subseteq\modform{\Gamma}{\B{a}}$, we also have $\B{a}\sigma\in\Gamma$ for all $\sigma\in\Pi$. This implies $\B{a}\pi\in\Gamma$. Using Lemma~\ref{lem:properties:mcs}~(\ref{lem:properties:mcs:iii}), we now obtain $\B{a}\psi\in\Gamma$, as desired. 

For the converse direction, assume $\B{a}\psi\in\Gamma$. Let $\Delta\in\MCS$ be arbitrary such that $\edge_{\Gamma} \GBrel{a} \edge_{\Delta}$. By Lemma~\ref{lem:accessibility:canonical:model}~(\ref{lem:accessibility:canonical:model:i}), we then have $\modform{\Gamma}{\B{a}}\subseteq\Delta$. Thus, from $\B{a}\psi\in\Gamma$, we obtain $\psi\in\Delta$. This yields $\Mc,\edge_{\Delta}\semH\psi$ by the induction hypothesis. Since $\Delta\in\MCS$ was arbitrary with $\edge_{\Gamma} \GBrel{a} \edge_{\Delta}$, this shows that $\Mc,\edge_{\Gamma}\semH\B{a}\psi$.     
\end{proof}

The desired completeness result now follows as an immediate consequence. 

\begin{theorem}[Completeness of $\EDL$]
The proof system $\EDL$ is sound and complete with respect to the class $\HSUTclass$ of all simple, $n$-uniform and tail-complete hypergraph models. That is, for every $\varphi\in\LKB$, we have: $\EDL\prov\varphi$ iff $\HSUTclass\semH\varphi$. 
\end{theorem}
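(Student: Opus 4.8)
The plan is to prove the two directions separately: soundness by a routine induction on the length of an $\EDL$-derivation, and completeness by harvesting the canonical model already constructed. Almost all of the technical work has been done in the preceding lemmas, so the theorem is mostly a matter of assembling them.

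For soundness, I would show by induction on derivation length that every $\EDL$-provable formula is valid in each $M=(G,\colmap,\ell)\in\HSUTclass$. The base case requires checking that each axiom scheme is valid in an arbitrary simple, $n$-uniform, tail-complete hypergraph model. Proposition~\ref{prop:properties:defined:relations} already supplies the frame properties I need: on such a model $\GBrel{a}$ is serial, transitive and Euclidean, so $\KB$, $\DB$, $\FourB$, $\FiveB$ are valid, and $\GKrel{a}$ is an equivalence relation, so $\KK$, $\TK$, $\FourK$, $\FiveK$ are valid. What remains is to verify the interaction axioms and locality. For $\KIB$ I observe that $\tail{\edge}\subseteq\undir{\edge}$, so $\edge_1\GBrel{a}\edge_2$ implies $\edge_1\GKrel{a}\edge_2$, which is condition $\Cond{1}$. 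For $\SPI$ and $\SNI$ I invoke $n$-uniformity: each hyperedge contains exactly one $a$-colored vertex, so whenever $\GKrel{a}$ and $\GBrel{a}$ each produce an $a$-colored witness inside a common edge, that witness is forced to be the \emph{same} vertex; this yields conditions $\Cond{2}$ and $\Cond{3}$, from which the validity of $\SPI$ and $\SNI$ follows by the standard modal correspondence argument. Finally, $\Loc$ holds because a $\GBrel{a}$-step fixes the unique $a$-colored vertex, and by Definition~\ref{def:hypergraph:model} the local variables of $a$ true in an edge are exactly those true at its $a$-vertex. Preservation under modus ponens and necessitation is immediate.

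For completeness I argue contrapositively. Suppose $\EDL\notprov\varphi$. Then $\{\neg\varphi\}$ is $\EDL$-consistent, so by Lindenbaum's Lemma (Lemma~\ref{lem:lindenbaum}) there is a maximally consistent set $\Gamma\in\MCS$ with $\neg\varphi\in\Gamma$, whence $\varphi\notin\Gamma$. I then pass to the canonical hypergraph model $\Mc$. By Lemma~\ref{lem:properties:canonical:model} its underlying hypergraph $\Gc$ is simple, $n$-uniform and tail-complete, and since $\COLc$ and $\VALc$ meet the conditions of Definition~\ref{def:hypergraph:model}, we have $\Mc\in\HSUTclass$. By the Truth Lemma (Lemma~\ref{lem:truth:lemma}), $\Mc,\edge_{\Gamma}\semH\varphi$ holds iff $\varphi\in\Gamma$; as $\varphi\notin\Gamma$, the hyperedge $\edge_{\Gamma}$ falsifies $\varphi$, so $\Mc\not\semH\varphi$ and hence $\HSUTclass\not\semH\varphi$.

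The main obstacle, and essentially the only genuinely hypergraph-specific verification, is establishing the interaction conditions $\Cond{2}$ and $\Cond{3}$ in the soundness step, since these are exactly where the directed-hypergraph semantics could in principle diverge from the Kripke setting; the uniqueness of the $a$-colored vertex per edge (i.e.\ $n$-uniformity) is what makes them go through. Every other ingredient has been prepared by the earlier lemmas, so the theorem is in effect the payoff of the canonical-model machinery.
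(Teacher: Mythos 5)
Your proposal is correct and follows the same route as the paper: soundness by induction on derivation length, and completeness by combining Lindenbaum's Lemma with the canonical hypergraph model, Lemma~\ref{lem:properties:canonical:model} and the Truth Lemma. The paper dismisses the soundness half as routine, whereas you spell out the frame-correspondence checks (in particular that $n$-uniformity forces the unique $a$-colored witness needed for $\Cond{2}$ and $\Cond{3}$), which is a correct and welcome elaboration rather than a different argument.
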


\begin{proof}
The soundness can be established by a routine induction on the length of a proof in $\EDL$. For the completeness direction, suppose that $\EDL\notprov\varphi$. Then, clearly, the set $\{\neg\varphi\}$ must be consistent, so it can be extended to a maximally consistent set $\Gamma$ by Lemma~\ref{lem:lindenbaum}. Since $\neg\varphi\in\Gamma$, we must have $\Mc,\edge_{\Gamma}\semH\neg\varphi$ by Lemma~\ref{lem:truth:lemma}, so we may conclude that $\Mc,\edge_{\Gamma}\not\semH\varphi$. Because $\Mc$ is simple, $n$-uni\-form and tail-complete by Lemma~\ref{lem:properties:canonical:model}, this shows that $\HSUTclass\not\semH\varphi$.  
\end{proof}

As mentioned above, an almost identical canonical model construction can also be used to show that $\LocKFourFive$ is complete with respect to the class $\HSUclass$ and that $\LocKDFourFive$ is complete with respect to the class $\HSUTclass$. The only difference is that maximally consistent sets are now restricted to formulas from the fragment $\LB$ and defined with reference to the systems $\LocKFourFive$ and $\LocKDFourFive$, respectively. Using the same reasoning as in Lemma~\ref{lem:properties:canonical:model}, one can prove that the resulting canonical models indeed have the desired properties. That is, the canonical model for $\LocKFourFive$ is simple and $n$-uniform, and the one for $\LocKDFourFive$ is simple, $n$-uniform and tail-complete. The truth lemma can be established in the same way as in Lemma~\ref{lem:truth:lemma}.   

\begin{theorem}[Further Completeness Results]
The system $\LocKFourFive$ is sound and complete with respect to the class $\HSUclass$ and the system $\LocKDFourFive$ is sound and complete with respect to the class $\HSUTclass$. That is, for every formula $\varphi$ from the doxastic fragment $\LB$, we have the equivalence $\LocKFourFive\prov\varphi \Leftrightarrow \HSUclass\semH\varphi$ and the equivalence $\LocKDFourFive\prov\varphi \Leftrightarrow \HSUTclass\semH\varphi$. 
\end{theorem}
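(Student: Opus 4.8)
The plan is to replay the canonical-model argument used for $\EDL$, now working entirely within the doxastic fragment $\LB$ and taking maximally consistent sets relative to the system at hand ($\LocKFourFive$ or $\LocKDFourFive$). All of the definitions — the local states $\locstate{\Gamma}{a} := \Vform{a}{\Gamma}\cup\Bform{a}{\Gamma}$, the tail/head partition $\Tset{\Gamma}$, $\Hset{\Gamma}$, and the canonical hypergraph $\Gc$ — are retained verbatim, simply reinterpreting $\MCS$ as the set of maximally consistent sets of the respective calculus. Two completeness claims must then be read off the \emph{same} construction, the only divergence being which structural property survives.

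For soundness I would verify each axiom directly against the semantics. By Proposition~\ref{prop:properties:defined:relations}, $\GBrel{a}$ is transitive and Euclidean on any $n$-uniform hypergraph, so $\FourB$ and $\FiveB$ are valid on $\HSUclass$, while $\KB$ holds by the usual normal-modal argument. The axiom $\Loc$ is validated because in an $n$-uniform model each edge carries a unique $a$-colored vertex, and $\edge_1 \GBrel{a} \edge_2$ forces this vertex to be shared (it lies in $\undir{\edge_1}$ and in $\tail{\edge_2}\subseteq\undir{\edge_2}$), whence $\edge_1$ and $\edge_2$ agree on every variable in $\Var{a}$. This settles soundness of $\LocKFourFive$ over $\HSUclass$. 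For $\LocKDFourFive$ over $\HSUTclass$, tail-completeness additionally yields seriality of $\GBrel{a}$ by Proposition~\ref{prop:properties:defined:relations}, validating $\DB$ as well; note that $\DB$ is deliberately \emph{not} required for $\LocKFourFive$, which is consistent with $\GBrel{a}$ failing to be serial on a non-tail-complete model.

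For completeness I would check that the supporting lemmas survive the restriction. Lemma~\ref{lem:equality:of:mcs} becomes simpler here: since every formula of $\LB$ is built from local variables by $\neg$, $\wedge$ and the modalities $\B{a}$, the induction has no case for a knowledge modality, and agreement of all $\locstate{\Gamma}{a}$ immediately yields $\Gamma = \Delta$ without invoking $\DB$. Lemma~\ref{lem:equality:local:states}(\ref{lem:equality:local:states:i}) and Lemma~\ref{lem:accessibility:canonical:model}(\ref{lem:accessibility:canonical:model:i}) rely only on $\FourB$, $\FiveB$ and $\Loc$, so they transfer unchanged to both systems. The Truth Lemma is likewise identical to the case $\varphi = \B{a}\psi$ of Lemma~\ref{lem:truth:lemma}, with one cosmetic change: where the $\EDL$ argument passed from $\prov\pi\rightarrow\psi$ to $\prov\B{a}\pi\rightarrow\B{a}\psi$ by applying necessitation (introducing $\K{a}$) together with $\KIB$ and $\KB$, one now applies necessitation directly — which in the pure doxastic calculi introduces $\B{a}$ — followed by $\KB$.

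The one genuinely system-dependent step, and the step I expect to be the crux, is tail-completeness, established in Lemma~\ref{lem:properties:canonical:model}. That argument shows $\modform{\Gamma}{\B{a}}$ is consistent by deriving a contradiction with $\DB$, then extends it via Lindenbaum's Lemma to a $\Delta$ with $\locstate{\Gamma}{a}\in\tail{\edge_{\Delta}}$. This reasoning is available for $\LocKDFourFive$ (which retains $\DB$), so its canonical model is simple, $n$-uniform and tail-complete, placing it in $\HSUTclass$. For $\LocKFourFive$, $\DB$ is absent and $\modform{\Gamma}{\B{a}}$ may be inconsistent, so a vertex can fail to lie in any tail; the canonical model is then only simple and $n$-uniform, which is exactly why the target class drops to $\HSUclass$. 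Uniformity and simplicity, by contrast, hold for both systems exactly as in Lemma~\ref{lem:properties:canonical:model}, the latter via the (simplified) Lemma~\ref{lem:equality:of:mcs}. Assembling these pieces in the style of the $\EDL$ completeness proof — extend $\{\neg\varphi\}$ to a maximally consistent $\Gamma$ and refute $\varphi$ at $\edge_{\Gamma}$ using the Truth Lemma — then yields both completeness claims.
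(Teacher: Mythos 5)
Your proposal is correct and follows essentially the same route as the paper, which itself only sketches this theorem by appeal to the $\EDL$ construction: restrict to $\LB$, redefine $\MCS$ relative to each calculus, and observe that tail-completeness is the one property that hinges on $\DB$. You in fact supply more detail than the paper does — correctly identifying that the $\K{a}$ case of Lemma~\ref{lem:equality:of:mcs} disappears, that necessitation now introduces $\B{a}$ directly in the Truth Lemma, and that the canonical model for $\LocKFourFive$ need only be simple and $n$-uniform — all of which matches the paper's intent.
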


\section{Correspondence to Kripke Models}\label{sec:conversions:kripke:hypergraph}

We will now present direct conversions between doxastic Kripke models and directed hypergraph models. More specifically, we will see that every Kripke model from the class $\KTEclass$ can be transformed into an equivalent hypergraph model from the class $\HSUclass$ and vice versa. The same correspondence also holds between the smaller classes $\KSTEclass$ and $\HSUTclass$. An illustration of the correspondence between Kripke models and hypergraph models is provided in Appendix~\ref{app:illustration:conversion}.\medskip 

\noindent\emph{From Kripke models to hypergraph models.} We first introduce a canonical conversion from doxastic Kripke models to hypergraph models. The procedure works in essentially the same way as the usual transformation from local and proper epistemic Kripke models to pure simplicial models (see \cite[Sects.~2.1--2.2]{goubault:etal:2021}).

\begin{definition}
Let $M = (W,\DoxRels,V)$ be a local doxastic Kripke model. For every pair $(a,u)\in\Ag\times W$, let $x_u^a$ be a fresh vertex and put $\Vertices_M := \{x_u^a \mid \text{$a\in\Ag$, $u\in W$}\}$. Moreover, let $\auxrel\subseteq \Vertices_M\times \Vertices_M$ be the equivalence relation given by $\auxrel := \{(x_u^a, x_v^a) \mid \text{$a\in\Ag$, $(u,v)\in\eqrel{\Brel{a}}$}\}$ and let $[x_u^a]$ be the equivalence class of a vertex $x_u^a$ with respect to $\auxrel$.\footnote{Observe that each of the classes $[x_u^a]$ must be non-empty by the reflexivity of $\eqrel{\Brel{a}}$.} For any $u\in W$, we also define $X_u := \{[x_u^a] \mid \text{$a\in\Ag$, $(u,u)\in \Brel{a}$}\}$ and $Y_u := \{[x_u^a] \mid \text{$a\in\Ag$, $(u,u)\notin \Brel{a}$}\}$. The \emph{hypergraph model associated with $M$} is the triple $\hypmod{M} =(G,\colmap,\ell)$ defined as follows:
\begin{itemize}
\item $G = (\Vertices,\Edges)$, where $\Vertices = \{[x_u^a] \mid x_u^a\in \Vertices_M\}$ and $\Edges = \{(X_u, Y_u)\mid u\in W\}$,
\item $\colmap([x_u^a]) := a$, for all $a\in\Ag$ and all $u\in W$,
\item $\ell([x_u^a]) := V(u)\cap\Var{a}$, for all $a\in\Ag$ and all $u\in W$.
\end{itemize}
For any $u\in W$, we also write $\edge_u$ for the hyperedge given by $\edge_u :=(X_u,Y_u)$. 
\end{definition}

We first note that $\hypmod{M} =(G,\colmap,\ell)$ is indeed a well-defined hypergraph model, for every local doxastic Kripke model $M = (W,\DoxRels,V)$. First of all, by construction, every hyperedge $\edge_u =(X_u,Y_u)$ in $\hypmod{M}$ must contain at most one equivalence class $[x_u^a]$ for each $a\in\Ag$, so the function $\colmap$ cannot assign the same color to multiple vertices within the same hyperedge. In other words, $\chi$ is in fact a \emph{coloring} of $G$. Now, to see that the valuation $\ell$ is well-defined, let $x_u^a, x_v^a \in \Vertices_M$ be arbitrary such that $[x_u^a] = [x_v^a]$. By definition of $\auxrel$, this yields $(u,v)\in\eqrel{\Brel{a}}$ and thus $(u,v)\in (\sym{\Brel{a}})^{m}$ for some $m\in\mathbb{N}$. Using the locality of $M$ and an easy induction on $m$, one can now prove that $V(u)\cap\Var{a} = V(v)\cap\Var{a}$. Therefore, $\ell$ is a well-defined function from equivalence classes to sets of local variables.

\begin{proposition}\label{prop:properties:kripke:to:hypergraph}
Let $M$ be a local doxastic Kripke model. 
\begin{enumerate}
\item\label{prop:properties:kripke:to:hypergraph:i} If $M$ belongs to the class $\KTEclass$, then $\hypmod{M}$ belongs to the class $\HSUclass$.
\item\label{prop:properties:kripke:to:hypergraph:ii} If $M$ belongs to the class $\KSTEclass$, then $\hypmod{M}$ belongs to the class $\HSUTclass$.
\end{enumerate}
\end{proposition}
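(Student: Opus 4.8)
The plan is to verify, in each case, the defining properties of the target class directly from the construction of $\hypmod{M}$, relying on the fact (already checked in the text) that $\hypmod{M}$ is a well-defined hypergraph model. For part~(\ref{prop:properties:kripke:to:hypergraph:i}) this amounts to establishing $n$-uniformity and simplicity; for part~(\ref{prop:properties:kripke:to:hypergraph:ii}) one additionally needs tail-completeness. The single observation driving everything is that the relation $\auxrel$ only ever links vertices carrying the \emph{same} agent-superscript, so that for a fixed world $u$ the classes $[x_u^a]$ with $a\in\Ag$ are pairwise distinct (they receive different colors) and together exhaust $\undir{\edge_u} = X_u\cup Y_u$.

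\emph{$n$-uniformity and simplicity.} First I would note that $\undir{\edge_u} = \{[x_u^a]\mid a\in\Ag\}$ for every $u\in W$, and that since $\colmap([x_u^a]) = a$ is well defined, distinct agents yield distinct classes; hence $|\undir{\edge_u}| = n$ for all $u$, giving $n$-uniformity. For simplicity, suppose $\edge_u\neq\edge_v$ yet $\undir{\edge_u}\subseteq\undir{\edge_v}$. By $n$-uniformity both undirected edges have exactly $n$ elements, so the inclusion forces $\undir{\edge_u} = \undir{\edge_v}$; matching colors then gives $[x_u^a] = [x_v^a]$ for every $a\in\Ag$, which by definition of $\auxrel$ means $(u,v)\in\eqrel{\Brel{a}}$ for all $a$. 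This is exactly where properness enters: its contrapositive yields $u = v$, whence $\edge_u = \edge_v$, contradicting the assumption. This completes part~(\ref{prop:properties:kripke:to:hypergraph:i}).

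\emph{Tail-completeness.} For part~(\ref{prop:properties:kripke:to:hypergraph:ii}) I would take an arbitrary vertex $[x_u^a]$ and exhibit a hyperedge whose tail contains it. By seriality there is a world $w$ with $u\Brel{a}w$. Applying Euclideanity to the two instances $u\Brel{a}w$ and $u\Brel{a}w$ (that is, taking both successors of $u$ to be $w$) yields $w\Brel{a}w$, so that $[x_w^a]\in X_w = \tail{\edge_w}$ by the definition of $X_w$. Moreover $u\Brel{a}w$ gives $(u,w)\in\eqrel{\Brel{a}}$, since $\eqrel{\Brel{a}}$ contains $\Brel{a}$, and therefore $[x_u^a] = [x_w^a]$. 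Combining these, $[x_u^a]\in\tail{\edge_w}$; as $[x_u^a]$ was arbitrary, $\hypmod{M}$ is tail-complete.

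The step I expect to be the most delicate is the tail-completeness argument. One has to notice that seriality by itself only supplies an \emph{outgoing} edge $u\Brel{a}w$, and it is Euclideanity—instantiated at a single repeated endpoint—that upgrades this to the reflexive loop $w\Brel{a}w$ at the \emph{target} world, which is precisely the condition placing the relevant class into a tail rather than a head. The uniformity and simplicity parts, by contrast, are essentially bookkeeping once one exploits that $\auxrel$ respects agent-superscripts and that properness is stated exactly in terms of the generated relations $\eqrel{\Brel{a}}$.
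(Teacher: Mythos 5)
Your proposal is correct and follows essentially the same route as the paper's proof: $n$-uniformity from the fact that $\auxrel$ preserves agent-superscripts, simplicity from properness via the generated relations $\eqrel{\Brel{a}}$, and tail-completeness by combining seriality with Euclideanity (instantiated with a repeated successor) to obtain $(w,w)\in\Brel{a}$ at the target world. The only cosmetic difference is that you run the properness step in contrapositive form, whereas the paper extracts a distinguishing agent directly from $u\neq v$; the content is identical.
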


A proof is provided in Appendix~\ref{app:prop:kripke:to:hyper}. One can also show that $M$ and $\hypmod{M}$ are modally equivalent in the sense that a formula $\varphi$ is true at a world $w$ in $M$ if and only if $\varphi$ is satisfied by the corresponding hyperedge $\edge_w$ in $\hypmod{M}$. If $M$ is not assumed to be serial, then the equivalence only holds for formulas from the fragment $\LB$. Otherwise, we consider the full language $\LKB$. The proof proceeds by an easy induction on the structure of a formula and is therefore omitted.  

\begin{proposition}
Let $M$ be a doxastic Kripke model and let $w$ be a world in $M$.
\begin{itemize}
\item If $M\in\KTEclass$, then for all $\varphi\in\LB$, we have: $M,w\semK\varphi \Leftrightarrow \hypmod{M},\edge_w\semH\varphi$.
\item If $M\in\KSTEclass$, then for all $\varphi\in\LKB$, we have: $M,w\semK\varphi \Leftrightarrow \hypmod{M},\edge_w\semH\varphi$.%
\end{itemize}
\end{proposition}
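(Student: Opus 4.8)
The plan is to prove both equivalences simultaneously by induction on the structure of $\varphi$, after first reducing the two modal clauses to statements that refer only to the relations $\Brel a$ and $\eqrel{\Brel a}$ of $M$. The crucial preliminary observation is that the assignment $u\mapsto\edge_u$ is a \emph{bijection} from $W$ onto $\Edges(G)$. Surjectivity is immediate from the definition of $\Edges$, and injectivity follows from properness: if $\edge_u=\edge_v$, then $\undir{\edge_u}=\undir{\edge_v}$; since the colours within an edge are distinct, matching vertices by colour yields $[x_u^a]=[x_v^a]$ for every $a\in\Ag$, which by definition of $\auxrel$ means $(u,v)\in\eqrel{\Brel a}$ for all $a$, and properness of $M$ then forces $u=v$. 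This bijection lets me replace the quantification over hyperedges in the satisfaction clauses for $\B a$ and $\K a$ by a quantification over the worlds of $M$.

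Next I would establish, for all $u,v\in W$ and $a\in\Ag$, the two relational correspondences
\[
\edge_u\GKrel a\edge_v \Leftrightarrow (u,v)\in\eqrel{\Brel a}, \qquad \edge_u\GBrel a\edge_v \Leftrightarrow (u,v)\in\Brel a .
\]
The epistemic equivalence is obtained by unwinding the definitions: the unique $a$-coloured vertices of $\undir{\edge_u}$ and $\undir{\edge_v}$ are $[x_u^a]$ and $[x_v^a]$, so $a\in\colmap(\undir{\edge_u}\cap\undir{\edge_v})$ holds iff $[x_u^a]=[x_v^a]$ iff $(u,v)\in\eqrel{\Brel a}$. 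The same bookkeeping shows $\edge_u\GBrel a\edge_v$ iff $[x_u^a]=[x_v^a]$ and $[x_v^a]\in\tail{\edge_v}$, that is, iff $(u,v)\in\eqrel{\Brel a}$ and $(v,v)\in\Brel a$.

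I expect the one genuinely non-trivial step to be upgrading this last characterization to $\edge_u\GBrel a\edge_v\Leftrightarrow(u,v)\in\Brel a$. The direction from $(u,v)\in\Brel a$ is easy: it gives $(u,v)\in\eqrel{\Brel a}$ at once, and $(v,v)\in\Brel a$ follows from Euclideanity applied to $(u,v)$ twice. The converse is where I expect the main obstacle, and it relies on the cluster structure of a transitive and Euclidean relation. I would argue that when $(v,v)\in\Brel a$, the $\eqrel{\Brel a}$-class of $v$ consists precisely of the $\Brel a$-cluster of $v$ together with the ``transient'' points whose $\Brel a$-successors all lie in that cluster; concretely, from $(u,v)\in\eqrel{\Brel a}$ one extracts a $\sym{\Brel a}$-path from $u$ to the reflexive point $v$, and transitivity and Euclideanity then collapse this path to a single edge $(u,v)\in\Brel a$. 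A short induction on the path length, distinguishing whether $u$ lies in the cluster or points into it, makes this precise. The rest of the argument is routine.

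With these correspondences and the bijection in hand, the induction is straightforward. For an atomic $p$, which lies in a unique $\Var a$, the claim follows from $\ell(\edge_w)=\bigcup_{b\in\Ag}(V(w)\cap\Var b)$ and the disjointness of the local-variable sets, so that $p\in\ell(\edge_w)\Leftrightarrow p\in V(w)$; the Boolean cases are immediate from the induction hypothesis. For $\B a\psi$ I would chain $\hypmod M,\edge_w\semH\B a\psi$ with ``$\hypmod M,\edge_v\semH\psi$ for all $v$ with $\edge_w\GBrel a\edge_v$'', then pass via the bijection and the $\GBrel a$-correspondence to ``$M,v\semK\psi$ for all $v$ with $(w,v)\in\Brel a$'' (using the induction hypothesis in the middle), which is exactly $M,w\semK\B a\psi$; the $\K a$ clause is handled identically using the $\GKrel a$-correspondence together with the fact that $M$ interprets $\K a$ through $\eqrel{\Brel a}$. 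Finally, the split into the two bullet points is just the split of the language: in the first statement we work over $\LB$, so the $\K a$ clause of the induction never arises and transitivity and Euclideanity alone drive the argument, whereas in the second statement, where $M$ is additionally serial and we range over the full language $\LKB$, the $\K a$ clause is included as well.
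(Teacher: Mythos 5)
Your proposal is correct and is essentially the argument the paper has in mind (the paper omits this proof as a routine induction): reduce the hyperedge quantifiers to world quantifiers via the bijection $u\mapsto\edge_u$ (injectivity from properness), establish the relational correspondences, and induct on $\varphi$. You have also correctly isolated and verified the one genuinely non-routine point, namely the forward direction of $\edge_u\GBrel{a}\edge_v\Leftrightarrow u\Brel{a}v$, i.e.\ that $(u,v)\in\eqrel{\Brel{a}}$ together with $(v,v)\in\Brel{a}$ forces $u\Brel{a}v$; your path-collapsing induction works (given a $\sym{\Brel{a}}$-path $u,z,\dots,v$ with $z\Brel{a}v$ by the inductive hypothesis, either $u\Brel{a}z$ and transitivity, or $z\Brel{a}u$ together with $z\Brel{a}v$ and Euclideanity, yields $u\Brel{a}v$), and in fact no case distinction on clusters is even needed.
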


\noindent\emph{From hypergraph models to Kripke models.} For the converse direction, we now define a general transformation from hypergraph models to Kripke models. We then show that models from the class $\HSUclass$ are indeed transformed into models from the class $\KTEclass$, while those from $\HSUTclass$ are transformed into models from $\KSTEclass$.

\begin{definition}
Let $M=(G,\colmap,\ell)$ be a hypergraph model. The \emph{Kripke model associated with $M$} is the triple $\kripmod{M} = (W,\DoxRels,V)$ defined in the following way: 
\begin{itemize}
\item $W:=\Edges(G)$ is the set of hyperedges of $G$,
\item $(\edge_1,\edge_2)\in\Brel{a} :\Leftrightarrow \edge_1\GBrel{a}\edge_2$, for all $\edge_1,\edge_2\in W$,
\item $V(\edge) := \ell(\edge)$, for all $\edge\in W$. 
\end{itemize}
\end{definition} 

\begin{proposition}\label{prop:properties:hypergraph:to:kripke}
Let $M$ be a hypergraph model. 
\begin{enumerate}
\item\label{prop:properties:hypergraph:to:kripke:i} If $M$ belongs to the class $\HSUclass$, then $\kripmod{M}$ belongs to the class $\KTEclass$. 
\item\label{prop:properties:hypergraph:to:kripke:ii} If $M$ belongs to the class $\HSUTclass$, then $\kripmod{M}$ belongs to the class $\KSTEclass$.
\end{enumerate} 
\end{proposition}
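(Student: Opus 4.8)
The plan is to verify, one by one, the defining properties of the target classes. Since the doxastic accessibility relation of $\kripmod{M}$ is by construction literally the relation $\GBrel{a}$ of the underlying hypergraph, the frame-theoretic conditions come essentially for free. Both classes $\KTEclass$ and $\KSTEclass$ require transitivity and Euclideanity, and these follow immediately from Proposition~\ref{prop:properties:defined:relations}~(\ref{prop:properties:defined:relations:i}), which guarantees that $\GBrel{a}$ is transitive and Euclidean in any $n$-uniform chromatic hypergraph. For part~(\ref{prop:properties:hypergraph:to:kripke:ii}), the additional seriality condition follows from the same proposition, since tail-completeness of $M$ makes $\GBrel{a}$ serial. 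Thus the only genuine work lies in establishing locality and properness.

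For locality, I would first exploit $n$-uniformity: because each hyperedge of an $n$-uniform chromatic hypergraph contains exactly $n = |\Ag|$ vertices, all of distinct colors, every hyperedge $\edge$ carries exactly one $a$-colored vertex, say $u_a^{\edge}$, for each $a\in\Ag$. Since the local variable sets $\Var{a}$ are pairwise disjoint and $\ell(u)\subseteq\Var{\colmap(u)}$, it follows that $V(\edge)\cap\Var{a} = \ell(u_a^{\edge})$. Now suppose $(\edge_1,\edge_2)\in\sym{\Brel{a}}$, say $\edge_1\GBrel{a}\edge_2$ (the other case is symmetric). By definition this means some $a$-colored vertex lies in $\undir{\edge_1}\cap\tail{\edge_2}\subseteq\undir{\edge_1}\cap\undir{\edge_2}$; by uniqueness of the $a$-colored vertex in each edge, this forces $u_a^{\edge_1} = u_a^{\edge_2}$, and hence $V(\edge_1)\cap\Var{a} = V(\edge_2)\cap\Var{a}$. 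This is exactly the locality condition.

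The main obstacle is properness, which concerns the generated equivalence relation $\eqrel{\Brel{a}} = \eqrel{\GBrel{a}}$ rather than $\GBrel{a}$ itself. My plan is to bound this relation by the epistemic relation of the hypergraph: since $\tail{\edge_2}\subseteq\undir{\edge_2}$ we have $\GBrel{a}\subseteq\GKrel{a}$, and because $\GKrel{a}$ is already an equivalence relation by Proposition~\ref{prop:properties:defined:relations}~(\ref{prop:properties:defined:relations:ii}), minimality of the generated equivalence relation yields $\eqrel{\GBrel{a}}\subseteq\GKrel{a}$. I then argue the contrapositive of properness: assume $(\edge_1,\edge_2)\in\eqrel{\GBrel{a}}$ for every agent $a$. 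Then $\edge_1\GKrel{a}\edge_2$ for every $a$, which by $n$-uniformity (again using the unique $a$-colored vertex per edge) forces $u_a^{\edge_1} = u_a^{\edge_2}$ for all $a$, i.e.\ $\undir{\edge_1} = \undir{\edge_2}$. Finally, simplicity of $M$ converts this identity of underlying vertex sets into identity of the hyperedges themselves: since in particular $\undir{\edge_1}\subseteq\undir{\edge_2}$, the simplicity condition gives $\edge_1 = \edge_2$. This establishes properness and completes both parts, which differ only in the extra seriality clause already handled above.
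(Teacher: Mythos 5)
Your proof is correct and follows essentially the same route as the paper's: frame properties (transitivity, Euclideanity, and, for part~(\ref{prop:properties:hypergraph:to:kripke:ii}), seriality) are read off from Proposition~\ref{prop:properties:defined:relations}, locality comes from the shared $a$-colored vertex, and properness is reduced via $n$-uniformity and simplicity to $\undir{\edge_1}=\undir{\edge_2}$. Your explicit bound $\eqrel{\GBrel{a}}\subseteq\GKrel{a}$ just spells out the step the paper dismisses as ``easy to see,'' and is a clean way to do it.
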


A proof can be found in Appendix~\ref{app:prop:hyper:to:kripke}. It remains to show that $M$ and $\kripmod{M}$ are modally equivalent: a formula $\varphi$ is satisfied at a hyperedge $\edge$ in $M$ if and only if $\varphi$ is satisfied at $\edge$ in $\kripmod{M}$. The proof is again easy and therefore omitted.   

\begin{proposition}
Let $M$ be a hypergraph model and let $\edge$ be a hyperedge in $M$.
\begin{itemize}
\item If $M\in\HSUclass$, then for all $\varphi\in\LB$, we have: $M,\edge\semH\varphi \Leftrightarrow \kripmod{M},\edge\semK\varphi$.
\item If $M\in\HSUTclass$, then for all $\varphi\in\LKB$, we have: $M,\edge\semH\varphi \Leftrightarrow\kripmod{M},\edge\semK\varphi$.
\end{itemize}
\end{proposition}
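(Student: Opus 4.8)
The plan is to argue by induction on the structure of $\varphi$, exploiting that $\kripmod{M}$ shares its carrier, its valuation, and its doxastic accessibility relations with $M$ essentially by fiat. The atomic case is immediate: $M,\edge\semH p$ iff $p\in\ell(\edge)$, while $\kripmod{M},\edge\semK p$ iff $p\in V(\edge)$, and $V(\edge)=\ell(\edge)$ by definition of $\kripmod{M}$. The Boolean cases for $\neg$ and $\wedge$ follow routinely from the induction hypothesis, since both satisfaction relations treat the connectives classically.

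For the modal case $\B{a}\psi$, I would observe that the doxastic accessibility of $\kripmod{M}$ is \emph{defined} to be $\GBrel{a}$, i.e.\ $(\edge_1,\edge_2)\in\Brel{a}$ iff $\edge_1\GBrel{a}\edge_2$. Hence the quantification over $\GBrel{a}$-successors in the hypergraph semantics coincides exactly with the quantification over $\Brel{a}$-successors in the Kripke semantics, and the equivalence $M,\edge\semH\B{a}\psi \Leftrightarrow \kripmod{M},\edge\semK\B{a}\psi$ drops out of the induction hypothesis without reference to any frame property. This already settles the first bullet, where $\varphi$ ranges only over the doxastic fragment $\LB$ and there is no knowledge operator to handle; consequently $M\in\HSUclass$ suffices.

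The crux is the knowledge case $\K{a}\psi$, which arises only in the second bullet. Here the two semantics quantify over \emph{different} relations: the hypergraph semantics uses $\GKrel{a}$, whereas in $\kripmod{M}$ the operator $\K{a}$ is interpreted via the generated equivalence $\eqrel{\Brel{a}}=\eqrel{\GBrel{a}}$ (since $\Brel{a}=\GBrel{a}$ on $W=\Edges(G)$). So the case reduces to the key lemma $\GKrel{a}=\eqrel{\GBrel{a}}$ on every hypergraph underlying a model in $\HSUTclass$. The inclusion $\eqrel{\GBrel{a}}\subseteq\GKrel{a}$ is the easy half: since $\tail{\edge_2}\subseteq\undir{\edge_2}$ we have $\GBrel{a}\subseteq\GKrel{a}$, and as $\GKrel{a}$ is an equivalence relation by Proposition~\ref{prop:properties:defined:relations}(\ref{prop:properties:defined:relations:ii}), it must contain the smallest equivalence relation extending $\GBrel{a}$, namely $\eqrel{\GBrel{a}}$. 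For the converse $\GKrel{a}\subseteq\eqrel{\GBrel{a}}$ I would invoke tail-completeness: if $\edge_1\GKrel{a}\edge_2$, there is an $a$-colored vertex $u\in\undir{\edge_1}\cap\undir{\edge_2}$, and tail-completeness places $u$ in $\tail{\edge_3}$ for some $\edge_3$; then $u$ witnesses both $\edge_1\GBrel{a}\edge_3$ and $\edge_2\GBrel{a}\edge_3$, so $(\edge_1,\edge_3),(\edge_3,\edge_2)\in\sym{\GBrel{a}}$ and therefore $(\edge_1,\edge_2)\in(\sym{\GBrel{a}})^{2}\subseteq(\sym{\GBrel{a}})^{*}=\eqrel{\GBrel{a}}$. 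With the lemma established, the $\K{a}$ clause closes exactly as the $\B{a}$ clause did, completing the induction for the full language $\LKB$.

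The main obstacle is precisely the inclusion $\GKrel{a}\subseteq\eqrel{\GBrel{a}}$, and it is here that tail-completeness is indispensable: without it, an $a$-colored vertex shared by two hyperedges need not lie in any tail, so the two edges could be epistemically indistinguishable without being linked by any chain of doxastic steps. This is the structural reason that the knowledge clause, and hence the full-language equivalence, requires $M\in\HSUTclass$ rather than merely $M\in\HSUclass$.
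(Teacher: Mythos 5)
Your proof is correct, and it follows what is clearly the intended argument (the paper omits this proof as ``easy''): a structural induction in which the atomic, Boolean, and $\B{a}$ cases are immediate because $\kripmod{M}$ inherits its carrier, valuation, and doxastic relations from $M$ by definition. You correctly isolate the only non-trivial point, namely the identity $\GKrel{a}=\eqrel{\GBrel{a}}$ needed for the $\K{a}$ case, and your two inclusions are sound --- the easy one using that $\GKrel{a}$ is an equivalence relation containing $\GBrel{a}$ (where $n$-uniformity gives reflexivity via Proposition~\ref{prop:properties:defined:relations}), and the converse using tail-completeness to route the shared $a$-colored vertex through the tail of a third hyperedge. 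Your closing remark about why tail-completeness, and hence membership in $\HSUTclass$, is indispensable for the full-language equivalence is exactly the right structural observation.
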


\section{Conclusion}\label{sec:conclusion}

We presented several types of directed hypergraphs as models for a system of epis\-temic-dox\-as\-tic logic and for two pure doxastic logics. We also proved that these logics are sound and complete with respect to certain classes of directed hypergraph models. Our work proposes an intuitive solution to the problem of modeling false beliefs raised in \cite[Sect.~4.3]{castaneda:etal:2024}. For future work, we plan to consider different accessibility relations on directed hyperedges. First investigations suggest that this also yields a hypergraph semantics for logics violating negative introspection. Furthermore, we plan to explore belief dynamics on hypergraph models, thereby addressing the next open problem from \cite{castaneda:etal:2024}: modeling agents capable of lying and deceiving others. A concrete example would be a scenario in which malicious agents may lie about their preferences in a voting scheme to influence the decision of others. In such a setting, honest agents must act on the basis of their beliefs, since they cannot know whether they have been lied to.

\paragraph{Acknowledgments.} Djanira Gomes, Valentin M{\"u}ller and Thomas Studer were supported by the Swiss National Science Foundation (SNSF) under grant number 10000440 (Epistemic Group Attitudes). David Lehnherr was supported by the SNSF under grant number 219403 (Emerging Consensus).

\newpage 
\appendix
\section{Supplement to Section~\ref{sec:conversions:kripke:hypergraph}}\label{app:illustration:conversion}

The illustration below shows a proper doxastic Kripke frame (left) and the corresponding chromatic hypergraph (right). As can be seen, the frame is serial, transitive and Euclidean, and the hypergraph is simple, $3$-uniform and tail-complete. 
\begin{center}
\scalebox{.7}{
\begin{tikzpicture}[scale=0.9]
\tikzset{every loop/.style={min distance=10mm,looseness=8}}

\node[circle,draw,font=\small,fill=mynodecolor] (w1) at (0,4) {$1$};
\node[circle,draw,font=\small,fill=mynodecolor] (w2) at (0,0) {$2$};
\node[circle,draw,font=\small,fill=mynodecolor] (w3) at (2,2) {$3$};
\node[circle,draw,font=\small,fill=mynodecolor] (w4) at (4,4) {$4$};
\node[circle,draw,font=\small,fill=mynodecolor] (w5) at (4,0) {$5$};

\draw[->] (w1)--(w2) node[midway,left] {$c$};
\draw[->] (w1)--(w3) node[midway,below left,xshift=1mm] {$c$};
\draw[->] (w1)--(w4) node[midway,above] {$a$};
\draw[->] (w1) edge[out=210,in=150,loop] node[left] {$b$} ();
\draw[->] (w2) edge[out=210,in=150,loop] node[left] {$a,b,c$} ();
\draw[<->] (w2)--(w3) node[midway,above left,xshift=1mm,yshift=-1mm] {$b,c$};
\draw[->] (w3)--(w4) node[midway,below right,xshift=-1mm] {$a$};
\draw[->] (w3) edge[out=110,in=70,loop] node[above,yshift=-1mm] {$b,c$} ();
\draw[->] (w4) edge[out=30,in=-30,loop] node[right] {$a,b,c$} ();
\draw[->] (w5)--(w2) node[midway,above] {$b$};
\draw[->] (w5)--(w3) node[midway,above right,xshift=-1mm,yshift=-1mm] {$b$\vphantom{$,$}};
\draw[->] (w5)--(w4) node[midway,right] {$c$};
\draw[->] (w5) edge[out=30,in=-30,loop] node[right] {$a$} ();

\tikzset{minimum size=.7cm}

\node[circle,draw,fill=myred] (a1) at (8,0) {$a$};
\node[circle,draw,fill=mygreen] (c1) at (8,2) {$c$};
\node[circle,draw,fill=myblue] (b1) at (8,4) {$b$};
\node[circle,draw,fill=myred] (a2) at (10,3) {$a$};
\node[circle,draw,fill=myblue] (b2) at (10,1) {$b$};
\node[circle,draw,fill=myred] (a3) at (12,0) {$a$};
\node[circle,draw,fill=mygreen] (c2) at (12,2) {$c$};
\node[circle,draw,fill=myblue] (b3) at (12,4) {$b$};

\node (E1) at (barycentric cs:a2=1,b1=1,c1=1) {$\edge_1$};
\node (E2) at (barycentric cs:a1=1,b2=1,c1=1) {$\edge_2$};
\node (E3) at (barycentric cs:a2=1,b2=1,c1=1) {$\edge_3$};
\node (E4) at (barycentric cs:a2=1,b3=1,c2=1) {$\edge_4$};
\node (E5) at (barycentric cs:a3=1,b2=1,c2=1) {$\edge_5$};

\begin{scope}[on background layer]
\filldraw[fill=mybgcolor] (a2.center)--(b1.center)--(c1.center)--cycle;
\filldraw[fill=mybgcolor] (a1.center)--(b2.center)--(c1.center)--cycle;
\filldraw[fill=mybgcolor] (a2.center)--(b2.center)--(c1.center)--cycle;
\filldraw[fill=mybgcolor] (a2.center)--(b3.center)--(c2.center)--cycle;
\filldraw[fill=mybgcolor] (a3.center)--(b2.center)--(c2.center)--cycle;
\end{scope} 

\draw[->] (E1) edge (a2) (b1) edge (E1) (E1) edge (c1);
\draw[->] (a1) edge (E2) (b2) edge (E2) (c1) edge (E2);
\draw[->] (E3) edge (a2) (b2) edge (E3) (c1) edge (E3);
\draw[->] (a2) edge (E4) (b3) edge (E4) (c2) edge (E4);
\draw[->] (a3) edge (E5) (E5) edge (b2) (E5) edge (c2);

\node (p) at (13.5,0) {};
\end{tikzpicture}}
\end{center}

\section{Proof of Proposition~\ref{prop:properties:kripke:to:hypergraph}}\label{app:prop:kripke:to:hyper}

We only prove part~(\ref{prop:properties:kripke:to:hypergraph:ii}) of the proposition. Let $M = (W,\DoxRels,V)$ be a doxastic Kripke model with $M\in\KSTEclass$, so $M$ is local and proper, and each of the relations $\Brel{a}$ is serial, transitive and Euclidean. Let $\hypmod{M} =(G,\colmap,\ell)$ be the associated hypergraph model. We prove that $\hypmod{M}$ is simple, $n$-uniform and tail-complete.

\emph{Uniformity.} Since $[x_u^a]\neq [x_u^b]$ for all $a,b\in\Ag$ with $a\neq b$, every $\edge_u$ contains exactly one class $[x_u^a]$ for each $a\in\Ag$. Thus, $|\undir{\edge_u}| = n$ for all $\edge_u\in\Edges(G)$.

\emph{Simplicity.} Let $\edge_u,\edge_v\in\Edges(G)$ be arbitrary hyperedges with $\edge_u \neq\edge_v$ and consider the corresponding worlds $u,v\in W$. Since $\edge_u \neq\edge_v$, we must have $u\neq v$. Hence, because $M$ is proper, there exists some agent $a\in \Ag$ with $(u,v)\notin \eqrel{\Brel{a}}$ and thus $[x_u^a]\neq [x_v^a]$. Suppose for a contradiction that $\undir{\edge_u}\subseteq\undir{\edge_v}$. Since $\hypmod{M}$ is $n$-uniform, this implies $\undir{\edge_u}=\undir{\edge_v}$. Hence, we must have $[x_u^a] = [x_v^a]$, in contradiction to what was said above. Therefore, we have $\undir{\edge_u}\not\subseteq\undir{\edge_v}$.
  
\emph{Tail-completeness.} Let $[x_u^a]\in\Vertices(G)$ be arbitrary. By the seriality of $\Brel{a}$, there exists some world $w\in W$ with $(u,w)\in\Brel{a}$. Since $\Brel{a}$ is Euclidean, this yields $(w,w)\in\Brel{a}$. Hence, we have $[x_w^a]\in X_w$, so $[x_w^{a}]\in\tail{\edge_w}$. Because $(u,w)\in\Brel{a}$, we also have $(u,w)\in\eqrel{\Brel{a}}$. This implies $[x_u^a] = [x_w^a]$ and therefore $[x_u^a]\in\tail{\edge_w}$. Since $[x_u^a]\in\Vertices(G)$ was arbitrary, this shows that $\Vertices(G) = \bigcup_{\edge\in\Edges(G)}\tail{\edge}$.\qed

\section{Proof of Proposition~\ref{prop:properties:hypergraph:to:kripke}}\label{app:prop:hyper:to:kripke}

We only prove part (\ref{prop:properties:hypergraph:to:kripke:ii}) of the proposition. Let $M=(G,\colmap,\ell)$ be a hypergraph model and suppose that $M\in \HSUTclass$, so $M$ is simple, $n$-uni\-form, and tail-complete. Moreover, let $\kripmod{M} = (W,\DoxRels,V)$ be the Kripke model determined by $M$. By Proposition~\ref{prop:properties:defined:relations}, we already know that each of the relations $\Brel{a}$ is serial, transitive and Euclidean. Hence, it suffices to establish the following properties.

\emph{Locality.} Let $a\in\Ag$ and $\edge_1,\edge_2\in W$ be arbitrary such that $(\edge_1,\edge_2)\in\sym{\Brel{a}}$. This implies $\edge_1\GBrel{a}\edge_2$ or $\edge_2\GBrel{a}\edge_1$. In either case, $\edge_1$ and $\edge_2$ must share an $a$-colored vertex $u$, so it follows that $V(\edge_1)\cap\Var{a} = \ell(u) = V(\edge_2)\cap\Var{a}$.

\emph{Properness.} Suppose for a contradiction that there are $\edge_1,\edge_2\in W$ with $\edge_1\neq\edge_2$ such that $(\edge_1,\edge_2)\in\eqrel{\Brel{a}}$ for all $a\in\Ag$. It is easy to see that, in this case, $\edge_1$ and $\edge_2$ must contain exactly the same vertices, so $\undir{\edge_1} = \undir{\edge_2}$. But this contradicts the assumption that $M$ is simple. Hence, $\kripmod{M}$ is proper.\qed

\bibliographystyle{plainurl}
\bibliography{bibliography}

\end{document}